\newtheorem{theorem}{Theorem}[section]
\newtheorem{claim}[theorem]{Claim}
\newtheorem{corollary}[theorem]{Corollary}
\newtheorem{lemma}[theorem]{Lemma}
\newtheorem{definition}[theorem]{Definition}
\newcommand{\E}{\mathbf{E}}
\newcommand{\eps}{\varepsilon}
\newcommand{\prob}[1]{\Pr \left( #1 \right)}
\newcommand{\EE}[1]{\E \left[ #1 \right]}
\newcommand{\bbN}{\mathbb{N}}
\newcommand{\calA}{\mathcal{A}}
\newcommand{\calB}{\mathcal{B}}
\newcommand{\calD}{\mathcal{D}}
\newcommand{\calE}{\mathcal{E}}
\newcommand{\calF}{\mathcal{F}}
\newcommand{\calP}{\mathcal{P}}
\newcommand{\calQ}{\mathcal{Q}}
\newcommand{\calT}{\mathcal{T}}
\newcommand{\calX}{\mathcal{X}}
\newtheorem{observation}{Observation}[section]
\DeclareMathOperator{\depSize}{call-count}
\newcommand{\ALCA}{\calA_{\textbf{LCA}}}
\newcommand{\our}{\textsc{Pruned}\xspace}
\newcommand{\ourPivot}{\textsc{Pruned Pivot}\xspace}
\newcommand{\pivot}{\textsc{Pivot}\xspace}
\newcommand{\Pivot}{\pivot}
\newcommand{\cluster}{\textsc{cluster}\xspace}
\newcommand{\ourCluster}{\textsc{\our-\cluster}\xspace}
\newcommand{\floor}[1]{\lfloor #1 \rfloor}
\newcommand{\ceil}[1]{\lceil #1 \rceil}
\newcommand{\td}{\tilde{d}}
\newcommand{\rb}[1]{\left( #1 \right)}
\newcommand{\IDsofar}{\textsc{ID}_{\text{so-far}}}
\newcommand{\recCalls}{\textsc{rec-calls}}
\newcommand{\opt}{\textsc{OPT}\xspace}
\newlength{\ourAlgIndent}
\newcommand{\ourIndent}[1][1]{%
    \foreach \n in {1,...,#1}{%
        \hspace{\ourAlgIndent}%
    }%
}
\DeclareMathOperator{\poly}{poly}
\newcommand{\knote}[1]{{\color{red}$\ll$Kostya: #1$\gg$}}
\begin{document}

\title{Pruned Pivot: Correlation Clustering Algorithm for Dynamic, Parallel, and Local Computation Models}
\author{
    Mina Dalirrooyfard \\ Machine Learning Research \\ Morgan Stanley 
    \and Konstantin Makarychev \\ Department of Computer Science \\ Northwestern University
    \and Slobodan Mitrovi\' c \\ Department of Computer Science \\ UC Davis}
\date{}

\maketitle
\begin{abstract}
Given a graph with positive and negative edge labels, the correlation clustering problem aims to cluster the nodes so to minimize the total number of between-cluster positive and within-cluster negative edges. This problem has many applications in data mining, particularly in unsupervised learning.
Inspired by the prevalence of large graphs and constantly changing data in modern applications, we study correlation clustering in dynamic, parallel (MPC), and local computation (LCA) settings. 
We design an approach that improves state-of-the-art runtime complexities in all these settings.
In particular, we provide the first fully dynamic algorithm that runs in an expected amortized constant time, without any dependence on the graph size.
Moreover, our algorithm essentially matches the approximation guarantee of the celebrated \textsc{Pivot}
algorithm.
\end{abstract}
\section{Introduction}
We study algorithms for the Correlation Clustering problem, which has many applications in Machine Learning and Data Mining~\cite{bansal2004correlation,becker2005survey,kalashnikov2008web,arasu2009large,firman2013learning,bonchi2013overlapping,li2017motif}.
Among the most prominent applications is clustering products into categories or detecting communities based on product co-purchasing~\cite{wang2013scalable,veldt2020parameterized,shi2021scalable}.
In this problem, we are given a set of objects with ``similar" or ``dissimilar" labels between every pair of objects, and the goal is to cluster these objects such that similar objects are in the same cluster and dissimilar objects are in different clusters. Formally, given a complete graph with edge weights in $\mathbb{R}$, correlation clustering with the \emph{minimum disagreement} objective asks to cluster the nodes such that the sum of the weights of positive edges between clusters plus the sum of the weights of negative edges inside clusters is minimized.\footnote{Correlation clustering has also been studied on weighted graphs and with other objectives such as \emph{maximum agreement} or \emph{minimum $\ell_p$ norm}
\cite{bansal2004correlation,swamy2004correlation,charikar2005clustering,demaine2006correlation,puleo2016correlation,
charikar2017local,ahmadi2019min,jafarov2021local,kalhan2019correlation}.} This paper focuses on the unweighted setting, where weights are in $\{-1,+1\}$. 

Correlation Clustering is APX hard~\cite{charikar2005clustering}. There has been a long line of work on approximation algorithms for correlation clustering; see e.g., \citet{bansal2004correlation,charikar2005clustering,demaine2006correlation,chawla2015near,jafarov2021local,cohen2022correlation,BCMT,chakrabarty2023single,cohen2023handling}.
The best known approximation factor is $1.437$ due to \citet*{cao2024understanding}.\footnote{The proof of $1.437$ approximation is computer-assisted, but the same publication also presents an analytical proof of $1.49$ approximation.} However, all known algorithms with approximation factors less than $3$ use linear programming (LP), which makes most of them impractical for dealing with massive data.  

In their seminal work, \citet*{pivot} introduced an elegant $3$-approximation algorithm called \Pivot, which runs in linear time (time proportional to the number of positive \emph{edges} in the graph). This algorithm is the algorithm of choice in practice. The algorithm has been adapted for various computational models including semi-streaming \cite{behnezhad2023single,cambus2022parallel,chakrabarty2023single}, parallel algorithms (MPC)~\cite{cambus2022parallel,BCMT}, local computation algorithms (LCA) \cite{BCMT}, and dynamic algorithms (\citet{BCMT}; see also \citet{chechik2019fully}).

We study correlation clustering for massive and dynamic graphs. Such graphs are used to represent 
social networks~\cite{tantipathananandh2011finding,hafiene2020influential}, knowledge graphs~\cite{fang2020dynamic,yan2021dynamic}, and user-product interactions~\cite{ding2019user}.
We design a \pivot-like algorithm that can be easily implemented in the fully dynamic regime and LCA and MPC models. Our algorithm is inspired by the recent works by \citet*{BCMT} and 
\citet*{chakrabarty2023single}.
\citet{BCMT} presented a $(3+\eps)$-approximate algorithm, called $R$-pivot, that runs in $O(\frac{1}{\eps})$ MPC rounds, for any $\eps>0$. 
This algorithm can be implemented in LCA with $\Delta^{O(1/\eps)}$-probe complexity, where $\Delta$ is the maximum node degree in the graph (consisting of positive edges). \citet{chakrabarty2023single} give a $(3+\eps)$-approximate semi-streaming algorithm that uses $O(n/\varepsilon)$ words of memory.
\citet{behnezhad2019fully} show how to maintain the lexicographically first maximal matching in a fully dynamic graph. Their result can be used to implement \Pivot in the fully dynamic setting. The expected update time for relabelling edges is $O(\log^2{n}\log^2{\Delta})$ per operation. \citet{chechik2019fully} provide a similar result for maximal matching with expected worst-case running time $O(\log^4{n})$ per update.

\subsection{Our Contributions}
In this paper, we provide a new variant of \Pivot, which we call \ourPivot, that gives a $(3+\eps)$-approximation for Correlation Clustering (see \cref{thm:approx-factor}).
Our algorithm is local and parallelizable by design: Given a node $u$ and common randomness, it returns the cluster of $u$ after exploring only $O(1/\eps)$ nodes of the entire graph. 
This makes it easy to implement \ourPivot in various computational models, including dynamic algorithms, MPC, and LCA.

Our first result is an efficient algorithm for dynamically maintaining a clustering. This is the first dynamic algorithm for Correlation Clustering, whose expected running time does not depend on the graph size. 

\begin{theorem}[Fully-dynamic correlation clustering]\label{cor:dynamic}
For any $\eps>0$, there is a data structure that maintains a $3+\eps$ approximation of correlation clustering in a fully-dynamic setting with an oblivious adversary. The expected update time is $O(1/\eps)$ per operation. 
\end{theorem}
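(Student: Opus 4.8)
The plan is to realize the clustering promised by \cref{thm:approx-factor} \emph{implicitly}, exploiting the fact that \ourPivot answers a single query ``which cluster does $u$ belong to?'' by exploring only $O(1/\eps)$ vertices. The data structure would store (i) the current positive-edge graph $G$ in a hash-based adjacency representation supporting $O(1)$-time neighbor insertion, deletion, membership test, and iteration, and (ii) a source of common randomness, which I would implement as a hash function $h\colon V\to[0,1]$ assigning each vertex an independent uniform rank — equivalently, a lazily revealed uniformly random permutation of $V$, with ties among equal ranks broken by vertex identifier. Under the oblivious adversary the entire update sequence is fixed before $h$ is drawn, so $h$ is independent of the graph $G_t$ present at every time step $t$.

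Each update would then be handled trivially: an edge insertion or deletion is an $O(1)$ modification of the adjacency structure, inserting a new isolated vertex $v$ additionally draws and stores $h(v)$ in $O(1)$ time, and deleting a vertex removes it together with its incident records. A query $\mathrm{cluster}(u)$ would be answered by invoking the local \ourPivot procedure from $u$ with ranks $h$; by its design (the locality guarantee underlying \cref{thm:approx-factor}) this explores $O(1/\eps)$ vertices in expectation and returns the identifier of $u$'s pivot, so the expected query time is $O(1/\eps)$, and hence all operations run in $O(1/\eps)$ expected time. Since the local procedure is a faithful deterministic simulation of the global \ourPivot execution once $h$ is fixed, the answers it returns for different query vertices are automatically mutually consistent, i.e.\ they form a genuine partition of $V$ equal to $\ourPivot(G_t,h)$, whose expected cost is at most $(3+\eps)\,\OPT(G_t)$ by \cref{thm:approx-factor}. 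Thus the structure maintains a $(3+\eps)$-approximate correlation clustering at every step.

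The remaining work is bookkeeping rather than analysis. I would make the random permutation lazily realized so that sampling the rank of a freshly inserted vertex and comparing two ranks each cost $O(1)$ (independent uniform reals, or enough random bits per vertex generated on first access), with the tie probability set to $0$ or absorbed into $\eps$; and I would check that the single-graph statements ``expected exploration $O(1/\eps)$'' and ``expected cost $(3+\eps)\OPT$'' transfer to every time step, which is immediate precisely because the oblivious adversary fixes $G_1,G_2,\dots$ independently of $h$, so at each $t$ we simply apply the single-graph guarantees to $G_t$. If instead one wants the clustering materialized — so that the label change of each vertex can be reported after an update — I would cache labels and, upon an update to an edge $\{u,v\}$, invalidate exactly the cached labels of vertices whose exploration tree touched $\{u,v\}$; the one extra ingredient is a bound of $O(1/\eps)$ on the expected number of such vertices, which follows from the same local analysis run on the ``reverse'' exploration from $u$ and from $v$. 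Either way, the only place the argument leans on earlier results is the $O(1/\eps)$-vertex exploration bound for \ourPivot, and I expect that — together with confirming that the oblivious-adversary hypothesis is exactly what is needed to keep the expectations valid pointwise over the update sequence — to be the crux; everything else is routine data-structure maintenance. $\qed$
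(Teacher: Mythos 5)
The high-level strategy you propose (recompute only the clusters of nodes whose exploration touches the updated edge) matches the paper's approach, but your write-up skips the two pieces that are actually the hard part of this theorem, and without them the claimed $O(1/\eps)$ update time does not follow.

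First, the bound on how many nodes are affected by an edge update is the paper's main technical contribution, not a corollary of the locality of \ourPivot. You write that the expected number of vertices whose exploration tree touches $\{u,v\}$ ``follows from the same local analysis run on the `reverse' exploration from $u$ and from $v$.'' There is no such ``reverse'' LCA, and this quantity is not bounded by the locality of a single forward run. What you need is precisely the statement $\E_{\pi}|\calQ(a,b)|\leq 2$ of \cref{thm:bound-X-Q}: the expected number of query paths that \emph{begin} at a fixed directed edge is $O(1)$. The paper establishes this via a supermartingale argument over a carefully engineered potential $\Phi_t(a,b)=2|\calD_t(a,b)|+|\calQ_t(a,b)|$ tracking ``dangerous'' paths as the permutation is revealed one node at a time (\cref{lem:martingale}). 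A vertex $w$ can be reached from a high-ranked vertex $a$ through many distinct query paths, so a naive ``$nk$ edges spread over $m$ edges'' averaging does not give a per-edge bound. You cannot assume this estimate; you must prove it, and the proof is nontrivial.

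Second, your $O(1/\eps)$ claim for re-running \ourCluster silently assumes $O(1)$-time access to the highest-ranked not-yet-tried neighbor of the current node, but a hash-based adjacency list with $O(1)$ insert/delete/iterate does not supply that: at each of the $k$ visited nodes, you would have to scan $\Theta(\deg)$ neighbors to find the top one, giving $\Omega(\Delta/\eps)$ per recomputation, not $O(1/\eps)$. The paper resolves this by maintaining, for every node $u$, an explicit set $N_k(u)$ of its $k$ highest-ranked neighbors, together with a bucket structure on all of $N(u)$ that supports insertions, deletions, and ``find the $k$-th highest rank'' in $O(\log k)$ expected time (the analysis in \cref{sec:nk} uses a Chernoff bound to argue the search rarely crosses more than one bucket). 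It also maintains $Q_P(w)$ and $Q_P^{-1}(u)$ with cross-pointers so that invalidating and re-registering dependencies costs $O(1)$ per queried node. These data-structure details are exactly what turn the abstract ``exploration touches $O(1/\eps)$ vertices'' into a genuine $O(1/\eps)$ expected amortized update time, and they are missing from your plan. Relatedly, the theorem bounds \emph{update} time for a maintained clustering, so the ``implicit clustering, answer queries on demand'' version you lead with does not by itself match the statement; only your explicit, cache-and-invalidate variant is relevant, and that is precisely where both gaps sit.
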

\cref{cor:dynamic} gives an almost $3$ approximation fully dynamic algorithm with update time $O(1)$, which answers an open question posed by \citet{BCMT}.

\begin{theorem}[Correlation clustering in MPC]\label{cor:MPC}
    For any $\eps>0$, there is a randomized $O(\log{\frac{1}{\eps}})$-round MPC algorithm that achieves a $3+\eps$ approximation for Correlation Clustering. This bound holds even when each machine has a memory sublinear in the node-set size.
\end{theorem}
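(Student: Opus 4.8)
The plan is to leverage the locality of \ourPivot stated earlier: to determine the cluster of any node $u$, it suffices to explore only $O(1/\eps)$ nodes, using shared randomness (a random priority/rank for each node). This means that the clustering decision for $u$ depends only on a ball of bounded ``exploration radius'' around $u$ in the positive graph, where the radius is governed by the depth of the recursion that \ourPivot performs. The MPC implementation then reduces to two tasks: (i) make each machine aware of enough of the graph to simulate these local explorations, and (ii) do so in $O(\log \frac{1}{\eps})$ rounds with memory sublinear in $n$ per machine.

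First I would set up the standard graph-exponentiation primitive. In round $t$, each node learns the set of nodes reachable within $2^t$ hops along the relevant ``active'' edges (the edges that \ourPivot may traverse given the ranks); after $O(\log r)$ rounds every node knows its entire $r$-hop neighborhood within the exploration process, where $r = O(1/\eps)$ is the exploration radius of \ourPivot. Since $r = O(1/\eps)$, this is $O(\log \frac{1}{\eps})$ rounds. Crucially, by the locality guarantee, the number of nodes each node needs to collect is $O(1/\eps)$ in expectation (and can be bounded with high enough probability, or the exploration truncated as in the $R$-pivot analysis so that the approximation degrades by only $\eps$), so the information stored per node is small; distributing this across machines keeps per-machine memory sublinear in $n$. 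Once a node has its local neighborhood, it simulates \ourPivot locally and outputs its cluster label (e.g., the ID of its pivot), with no further communication.

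The main obstacle I expect is controlling the memory blow-up during graph exponentiation. Naively, doubling the radius can square the neighborhood size, so after $O(\log \frac{1}{\eps})$ rounds a node could in principle try to store $2^{O(1/\eps)}$ nodes, which is fine as a constant but interacts badly with total-memory constraints if many high-degree nodes are involved. The fix is exactly the pruning built into \ourPivot (hence the name): the exploration is designed so that each node only ever needs to look at $O(1/\eps)$ other nodes regardless of degrees — the recursion is both \emph{narrow} and \emph{shallow} — so the sets propagated during exponentiation stay of size $O(1/\eps)$ throughout, and the total memory is $O(n/\eps)$, trivially distributable with sublinear-per-machine memory. A secondary point to handle carefully is the shared randomness: the node ranks must be consistent across machines, which is standard (seed a hash function and broadcast the seed, or use $O(\log n)$-wise independence, which suffices since each local computation reads only $O(1/\eps)$ ranks).

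Finally, I would invoke \cref{thm:approx-factor} to conclude that the clustering produced is a $(3+\eps)$-approximation, and combine the round and memory accounting above to get an $O(\log \frac{1}{\eps})$-round MPC algorithm with sublinear memory per machine, completing the proof. The approximation guarantee transfers verbatim because each node computes exactly the label \ourPivot would assign it (the MPC algorithm is merely a faithful distributed simulation of the sequential local procedure), so no new loss is incurred in the parallelization.
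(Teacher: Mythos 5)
Your high-level plan matches the paper's: use graph exponentiation to gather, for each node, enough of its neighborhood to simulate \ourPivot locally, and then output each node's label on a single machine, inheriting the approximation factor from \cref{thm:approx-factor}. The $O(\log k)$ round count for exponentiation with $k = O(1/\eps)$ is also the paper's accounting. However, there is a genuine flaw in your memory analysis that needs fixing, even though it does not break the theorem's statement.

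You claim that ``the sets propagated during exponentiation stay of size $O(1/\eps)$ throughout, and the total memory is $O(n/\eps)$.'' This is incorrect. While it is true that the \emph{actual} recursion from any node visits only $O(k)$ other nodes, you cannot know \emph{which} $O(k)$ nodes those will be until you already have the subgraph in hand --- whether the recursion proceeds from a given neighbor $v$ to $v$'s neighbors depends on whether $v$ turns out to be a pivot, which in turn depends on $v$'s own recursion. So graph exponentiation must pessimistically gather the entire $k$-hop out-neighborhood in the directed ``top-$k$ neighbors'' graph $H$ (which the paper defines explicitly after first sorting each node's neighbors in $O(1)$ rounds to extract its $k$ highest-ranked neighbors). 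That out-neighborhood has up to $k^k$ nodes, and the paper correspondingly states $O(n \cdot k^{k+1})$ total memory, not $O(n/\eps)$. For constant $\eps$, $k^{k+1}$ is a constant, so the total memory is $O(n)$ and can be distributed across machines with sublinear per-machine memory --- the conclusion still holds --- but your stated bound, and the reasoning behind it (that the propagated sets never exceed $O(1/\eps)$), is wrong and should be replaced with the $k^{O(k)}$ accounting. You also omit the initial $O(1)$-round sorting step that restricts each node's out-degree in $H$ to $k$ in the first place; without it, the exponentiation would blow up in the degree $\Delta$ rather than in $k$.
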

The previously best-known MPC algorithm by \citet{BCMT} requires $O(1/\eps)$ rounds. Hence, our approach improves the dependence on $1/\eps$ exponentially.
\begin{theorem}[Correlation clustering in LCA]\label{cor:LCA}
For any $\eps>0$, there is a randomized $O(\Delta / \eps)$-probe complexity local computation algorithm that achieves a $3+\eps$ approximation of correlation clustering on graphs with maximum 
degree~$\Delta$.
\end{theorem}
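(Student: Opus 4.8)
The plan is to run the local cluster-query procedure of \ourPivot as a local computation algorithm. Recall from the design of \ourPivot that, given a node $u$ together with the common random priorities, the recursive routine $\ourCluster(u)$ outputs the cluster label of $u$ after exploring only $O(1/\eps)$ nodes of the graph. I would instantiate the LCA so that (i) a probe is a single neighbor/adjacency query to the input graph of positive edges, and (ii) the common randomness is a shared random tape from which the priority (rank) of any node can be read on demand; using the same tape for every query makes the answers mutually consistent. On a query for $u$, the LCA simply simulates $\ourCluster(u)$: whenever the routine needs the priority of a node $v$ it reads it from the tape, and whenever it needs to branch on the neighbors of $v$ — e.g.\ to find the lowest-rank neighbor of $v$, or to test which higher-priority neighbors of $v$ are pivots — it scans the adjacency list of $v$.

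\textbf{Probe-complexity accounting.} Each node $v$ that the routine explores contributes at most $\deg(v) \le \Delta$ adjacency probes, together with $O(1)$ reads of the random tape, which are not counted toward probe complexity. Since the recursion tree of \ourPivot is narrow and shallow — so that the number of explored nodes is $O(1/\eps)$, as established in the analysis of \ourPivot — the total number of probes per query is $O(\Delta/\eps)$, as claimed. Because all queries use the identical random tape, the collection of per-node answers is exactly the clustering that the global \ourPivot algorithm would output on that randomness; hence the approximation guarantee is inherited directly from \cref{thm:approx-factor}, giving a $3+\eps$ approximation.

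\textbf{Randomness model and the main obstacle.} If one insists on the strict LCA model with a short random seed rather than an arbitrary shared tape, it suffices to observe that a single query consults only the priorities of the $O(1/\eps)$ explored nodes, so the ranks can be supplied by a seeded pseudorandom (or sufficiently-many-wise independent) source with $O(\log n)$-bit values; this induces a uniformly random ordering with high probability and does not affect the $3+\eps$ bound, mirroring the standard reduction used for \pivot-type LCAs. The step I expect to be the main obstacle is the careful probe accounting: verifying that the exploration at each node genuinely needs only that node's own adjacency list (and not, say, two-hop neighborhoods), so that the per-node cost is $O(\Delta)$ and the per-query cost is $O(\Delta/\eps)$, and confirming that the ``$O(1/\eps)$ explored nodes'' bound holds with the probabilistic guarantee (worst-case after pruning, or in expectation) needed to state the probe complexity cleanly. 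Everything else — consistency of answers across queries and the approximation ratio — follows immediately from the locality of \ourPivot and \cref{thm:approx-factor}.
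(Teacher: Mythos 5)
Your approach matches the paper's: the LCA simulates $\ourCluster$, each explored node contributes $O(\Delta)$ adjacency probes, and the hard cap of $k=O(1/\eps)$ recursive calls in \ourPivot bounds the number of explored nodes in the worst case (not merely in expectation), giving $O(\Delta/\eps)$ probes per query with the approximation inherited from \cref{thm:approx-factor}. The paper also fills in the two points you flag as obstacles: it instantiates the limited-independence remark with a $w$-wise independent hash function with $w \ge 2\Delta k$ mapping to $[0,n^{10})$, and it adds a lexicographic tie-breaking rule so that rank collisions (probability $\le n^{-4}$) do not break the $3+\eps$ guarantee even when $\OPT=0$ (unions of cliques).
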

\cref{cor:LCA} gives an almost $3$ approximation LCA in, essentially, $O(\Delta)$ probes, thus answering another question posed by \citet*{BCMT}. 
The previously best-known algorithms for LCA were the algorithm by \citet{BCMT} giving $3+\eps$ approximation in $\Delta^{O(1/\eps)}$ probes, and
the work by \citet{behnezhad2023single} providing a $5$-approximate algorithm in $O(\poly \log n)$ space and $O(\Delta \cdot \poly \log n)$ probes.\footnote{Our LCA algorithm uses $O(\Delta / \eps \cdot \poly \log n)$ space.}

We provide empirical evaluations on synthetic graphs in \cref{sec:empirical}. They show that exploring only $4$ nodes to obtain a node's clustering suffices for the cost of \ourPivot to be within $1\%$ of the cost of \pivot.

Moreover, in \cref{section:pram} we describe how to implement our algorithm in the CRCW PRAM model in $O(1/\eps)$ rounds.

We finally note that our main technical result (\cref{thm:bound-X-Q}) is of independent interest.
A seminal work by \citet*{yoshida2009improved} shows how to learn whether a randomly chosen node is in MIS using the average-degree LCA probes.
Our work essentially recovers that claim using a significantly different analysis.

\subsection{Comparison to prior work}
Several closely related works have introduced variants of the \pivot algorithm. In \pivot, nodes are processed according to a predefined and random order, and each node queries its neighbors that have already been processed to determine its cluster (for a formal description of \Pivot see \cref{sec:pivot}). Hence, to determine the cluster of each node, multiple   ``query paths" are made, and the collection of these query paths makes a ``query tree". (For a more formal definition of query paths and query trees, refer to \cref{sec:querypaths}.)

First, given a parameter $R$, the $R$-\pivot algorithm by \citet{BCMT} runs the \pivot algorithm but only considers ``query paths" of depth at most $R$. If, to obtain the cluster of a node, one needs to consider query paths of length more than $R$, then this node is put into a singleton cluster. \citet{BCMT} show that this algorithm has approximation factor $3+O(\frac{1}{R})$.

Second, the semi-streaming algorithm by~\citet{chakrabarty2023single} is another modified version of \pivot. In this approach, given a parameter $R$, every node only queries at most its $R$ top-ranked neighbors when deciding on their cluster. This algorithm yields an approximation factor of $3 + O(\frac{1}{R})$. 

In \ourPivot, instead of only limiting the number of neighbors each node queries or the depth of the query paths, our algorithm limits \emph{the total size} of the query tree. Our analysis of the algorithm uses a new approach to counting query and \emph{expensive} paths, which is very different from the approach of \citet{BCMT}.

Our main technical contribution is a proof that by limiting the total size of the query tree, the existence/non-existence of each edge only influences at most a constant number of other nodes. This crucial point allows us to achieve low probe complexity in LCA and MPC, and constant update time in the dynamic setting. 

\section{Preliminaries}
An instance of the correlation clustering problem receives an unweighted graph $G = (V, E)$ on input. We consider $E$ representing positive and $(V \times V) \setminus E$ representing negative labels between the nodes of $V$.
This problem aims to cluster $V$ to minimize the number of positive between-cluster and negative within-cluster labels.
The neighbors of a node $u \in V$ are denoted by $N(u)$.
We let $u\in N(u)$, i.e., $u$ is a neighbor of itself.
Next, we formally define the MPC and LCA models.

\textbf{The MPC model.} 
Massively Parallel Computation (MPC) is a theoretical model of real-world parallel computation such as MapReduce \cite{dean2008mapreduce}.
It was introduced in a sequence of works by \citet{dean2008mapreduce,karloff2010model,goodrich2011sorting}.
In MPC, computation is performed in synchronous rounds, where in each round every machine locally performs computation on the data that resides locally and then sends and receives messages to any other machine. Each machine has a memory of size $S$, and can send and receive messages of total size $S$. As the local computations frequently run in linear or near-linear time, they are ignored in the analysis of the complexity of the MPC model, and so the efficiency of an algorithm in this model is measured by the number of rounds it takes for the algorithm to terminate where the memory $S$ plays a key role. 
We focus on the \emph{sublinear memory} regime, where $S=n^{\alpha}$ for some constant $\alpha\in (0,1)$.

\textbf{The LCA model.} Local Computation Algorithms (LCAs) were introduced by \citet{rubinfeld2011fast} for tasks where the input and output are too large to be stored in the memory. An LCA is not required to output the entire solution but should answer queries about a part of the output by examining only a small portion of the input.
In Correlation Clustering, the query is a node $v$, and the output is the cluster ID of $v$. Formally, an LCA $A$ is given access to the adjacency list oracle for the input graph $G$, a tape of random bits, and local read-write computation memory. When given an input query $x$, $A$ must compute an answer for $x$ depending only on $x, G$ and the random bits. The answers given by $A$ to all possible queries must be consistent, meaning that they must constitute some valid solution to the computation problem. 

We use \emph{probe} to refer to accessing a node in an adjacency list. The LCA complexity of an algorithm is measured by the number of probes the algorithm makes per single query.

\section{Recursive and Pruned Pivot}\label{sec:pivot}
This section describes our variant of the \pivot algorithm that we call \ourPivot. 
We remind the reader how the standard \pivot algorithm works. First, it picks a random ordering $\pi:V\to\{1,\dots,n\}$. We say that $\pi(u)$ is the rank of node $u$. If $\pi(u)<\pi(v)$, then $u$ has a higher rank than $v$. 
Therefore, the node with rank $1$ is the highest-ranked, and the node with rank $n$ is the lowest-ranked node.
The algorithm maintains a list of not yet clustered nodes. Initially, all nodes are not clustered. At every step, the algorithm picks the highest not yet clustered node, marks it as a pivot, and assigns itself and all its not yet clustered neighbors to a new cluster. The algorithm labels all nodes in this new cluster as \emph{clustered} and proceeds to the next step. Each cluster created by the \pivot algorithm contains a unique pivot node. We say that the cluster is represented by that pivot. If node $u$ belongs to the cluster represented by pivot $v$, we say that $u$ is assigned to pivot $v$. 

To describe our variant of the \pivot algorithm, we first rewrite the standard \pivot as a recursive or top-down dynamic programming algorithm. The algorithm relies on the recursive function \cluster (see \cref{alg:recursive-pivot}). 
For a given node $u$ and random permutation $\pi$, this function returns the pivot node to which $u$ is assigned, along with a flag indicating if $u$ is a pivot. Note that $u$ is a pivot if and only if it is assigned to itself.
\begin{algorithm}[h]
\caption{\textsc{Recursive Pivot} \label{alg:recursive-pivot}}
\begin{algorithmic}[1]
    \STATE \textbf{function} $\cluster(u,\pi)$:
        \STATE \ourIndent Sort all neighbors of $u$ (including $u$ itself) by their rank $\pi(v)$. Denote the sorted list by~$N_{\pi}$.
        \STATE \ourIndent \textbf{for all} $v$ in $N_{\pi}$:
        \STATE \ourIndent[2] \textbf{if} $v = u$:
        \STATE \ourIndent[3] \textbf{return} $u$ belongs to the cluster of $u$; $u$ is a pivot.
        \STATE \ourIndent[2] $\cluster(v,\pi)$
        \STATE \ourIndent[2] \textbf{if} $v$ is a pivot: 
        \STATE \ourIndent[3] \textbf{return} $u$ is in the cluster of $v$; $u$ is not a pivot. 

    \end{algorithmic}
\end{algorithm}


To reduce the running time, we can cache (memoize) the values returned by the function \cluster. We want to use this recursive function in our local computation algorithm (LCA). The problem is, however, that to cluster some nodes, the algorithm may need to make as many as $\Omega(n)$ calls to \cluster (for instance, if node $u$ is connected to all nodes in the left part of the complete bipartite graph $K_{n,n}$).
That is why we propose a crucial change: execute only $k$ recursive calls of \pivot. If the status of the node is not determined by then, mark that node as \emph{unlucky} and make it a singleton. The algorithm is given below.

\begin{algorithm}[h]
\caption{\ourPivot \label{alg:our-pivot}}

\begin{algorithmic}[1]
\STATE Initialize a global variable $\recCalls$ to $0$.

\STATE \textbf{function} \ourCluster($u$,$\pi$):
\STATE \ourIndent \textbf{If} $\recCalls \geq k$:
\STATE \ourIndent[2] terminate this recursion
\STATE \ourIndent Sort all neighbors of $u$ (including $u$ itself) by their rank $\pi(v)$. Denote the sorted list by~$N_{\pi}$.
\STATE \ourIndent \textbf{for all} $v$ in $N_{\pi}$:
\STATE \ourIndent[2] \textbf{if} $v = u$:
\STATE \ourIndent[3] \textbf{return} $u$ belongs to the cluster of $u$; $u$ is a pivot. 
\STATE \ourIndent\ourIndent\;$\recCalls \gets \recCalls  + 1$ 
\STATE \ourIndent[2] \our-\cluster($v$,$\pi$) 
\STATE \ourIndent[2] \textbf{if} $v$ is a pivot: 
\STATE \ourIndent[3] \textbf{return} $u$ is in the cluster of $v$; $u$ is not a pivot. 
    \end{algorithmic}
\end{algorithm}



The recursion tree for the modified \textsc{Pruned-cluster} function 
contains at most $k$ edges. Consequently, if $k$ is a constant, the running time of function \textsc{Pruned-cluster} is also constant. We show how to implement this algorithm as a Local Computation (LCA), Massively Parallel Computation (MPC), and Dynamic Graph Algorithm. In the next section, we prove that the approximation factor of \ourPivot is $3 + O(1/k)$.

\section{Sequential Implementation}\label{sec:seq-imp}
In the previous section, we described \ourPivot algorithm. \emph{For the sake of analysis}, we now examine a sequential algorithm that produces the same clustering as the recursive algorithm above and, moreover, marks the same set of nodes as unlucky. 
First, we consider the standard \pivot implemented as a bottom-up dynamic programming algorithm (see \cref{alg:sequential-pivot}).

\begin{algorithm}[h]
\caption{\textsc{Sequential \Pivot} \label{alg:sequential-pivot}}
\begin{algorithmic}[1]
    \STATE Pick a random ordering $\pi: V \to \{1,\dots,n\}$. 
    \STATE Let $V_{\pi}$ be the list of all nodes $u\in V$ sorted by the rank $\pi(u)$.
    \STATE \textbf{for each} $u$ \textbf{in} $V_{\pi}$:
        \STATE \ourIndent Sort all neighbors of $u$ by their rank $\pi(v)$. Denote the sorted list by $N_{\pi}$.
        \STATE \ourIndent \textbf{while} $u$ is not assigned to a cluster:
        \STATE \ourIndent[2] Pick the next neighbor $v \in N_{\pi}(u)$. 
        \STATE \ourIndent[2] \textbf{if} $v$ is a pivot: place $u$ in the cluster of $v$.
        \STATE \ourIndent[2] \textbf{if} $v = u$: mark $u$ as a pivot; create a new cluster for $u$; and place $u$ in that cluster.
    \end{algorithmic}
\end{algorithm}


In the main loop (see the \textbf{for each} loop above), the algorithm iterates over all nodes in $V$. At iteration $i\in\{1,\dots,n\}$, the algorithm processes node $u$ with rank $i$, i.e., $u=\pi^{-1}(i)$. 
It checks all neighbors $v$ of $u$ with rank higher than that of $u$. If one of these neighbors is a pivot, the algorithm assigns $u$ to the highest-ranked pivot neighbor of $u$. 
If none of these neighbors are pivots, the algorithm marks $u$ as a pivot and assigns $u$ to itself.

Let us set up some notation. Consider a neighbor $v$ of $u$. It is processed at iteration $i=\pi(v)$. Suppose that no other neighbor of $u$ (including $u$ itself) is marked as a pivot before iteration $i$. Then, we know that $u$ will be assigned to the cluster of $v$, since it is the highest-ranked pivot neighbor of $u$. 
Thus, we will say that $u$ is \emph{settled} at step $i$. In other words, $u$ is settled when the first neighbor of $u$ is marked as a pivot. We denote the iteration when $u$ is settled by $\sigma(u)$. Note that node $u$ is assigned to the node processed at iteration $\sigma(u)$, i.e., node $\pi^{-1}(\sigma(u))$. In particular, if $u$ is a pivot, then it is settled at the iteration $i=\pi(u)$, the same iteration as it is processed. We always have $\sigma(u)\leq \pi(u)$, because if $u$ is not settled before iteration $\pi(u)$, then it is marked as a pivot and assigned to itself at iteration $\pi(u)$; thus, if 
$\sigma(u) \geq \pi(u)$, then
$\sigma(u) = \pi(u)$.

If neighbor $v$ of $u$ is considered in the \textbf{while}-loop of the \textsc{Sequential Pivot} algorithm, then we say that $u$ queries $v$. We denote by $Q(u)$ the set of all neighbors queried by $u$, except for $u$ itself, and call this set the set of \emph{queried neighbors of $u$}.  Observe that
$
Q(u) = \{v \in N(u)\setminus\{u\}: \pi(v) \leq \sigma(u)\}.
$
That is, $Q(u)$ is the set of all neighbors of $u$, excluding $u$, whose rank is higher than the rank of the pivot to which $u$ is assigned. Finally, we formally define the recursion tree $\calT_u$  for node $u$. The definition is recursive: If $Q(u)$ is empty, then $\calT_u$ only contains node $u$. Otherwise,  $\calT_u$ is the tree with root $u$  and $|Q(u)|$ subtrees $\widetilde\calT_v$ attached to it -- one tree for every $v\in Q(u)$. Each $\widetilde\calT_v$ is a copy of the recursive tree  $\calT_v$.
We stress that the recursive tree may contain multiple copies of the same node~$v$. One can think of  nodes of $\calT_u$ as being ``stack traces'' or ``execution paths'' for the recursive function \cluster.

\medskip

\noindent\textbf{Sequential Pivot with Pruning.} We now describe how to modify the bottom-up algorithm to make it equivalent to \ourPivot algorithm. First, we run the bottom-up algorithm as is and record its trace. We then define the recursive call count for every node $u$. The recursive call count of $u$ equals the number of edges in the recursive tree $\calT_u$. It can be computed using the following recurrence relation:
\begin{equation}\label{eq:def-dep-size}
\depSize(u) = \sum_{v \in Q(u)}(1 + \depSize(v)).
\end{equation}
If $Q(u)$ is empty, then the recursive call count of $u$ equals $0$, by definition. 
We mark node $u$ as \emph{unlucky} if its recursive call count is at least $k$. Note that if one of the queried neighbors of $u$ is unlucky, then $u$ is also unlucky. 

\begin{algorithm}[h]
\caption{\textsc{Pruning} \label{alg:pruning}}
\begin{algorithmic}[1]
    \STATE Compute the recursive call count of every node $u$ using recurrence relation~(\ref{eq:def-dep-size}).
    \STATE Mark all nodes $u$ with $\depSize(u) \geq k$ as unlucky.
    \STATE Create a new cluster for each unlucky node $u$, remove $u$ from its current cluster, and place $u$ in the new cluster.

    \end{algorithmic}
\end{algorithm}


The pruning step puts all unlucky nodes into singleton clusters. We refer to the standard \pivot algorithm as \textsc{Pivot} without pruning or 
simply \textsc{Pivot}. We refer to the \pivot algorithm that runs the pruning as \textsc{Pivot with Pruning}. Note that the \textsc{Pivot with Pruning} algorithm produces the same clustering output as the \ourPivot algorithm described in the previous section. The key difference between these algorithms lies in their structure:  \textsc{Pivot with Pruning} consists of two distinct steps -- a \textsc{Pivot} step followed by a \textsc{Pruning} step -- whereas \ourPivot combines both steps together. We show that the expected cost of the \textsc{Pivot with Pruning} is $(3+O(1/k)) \opt$ and obtain the following theorem.

\begin{theorem}\label{thm:approx-factor}
The expected cost of the clustering produced by the \textsc{Pruned Pivot} is $(3+O(1/k))\;\opt$.
\end{theorem}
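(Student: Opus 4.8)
The plan is to bound the expected cost of \textsc{Pivot with Pruning} by the expected cost of plain \textsc{Pivot} plus an additive error term that accounts for the pruning step, and then invoke the classical $3\,\opt$ analysis of \pivot. Concretely, let $\mathrm{cost}(\mathcal{C})$ denote the correlation-clustering objective. The pruning step only moves \emph{unlucky} nodes into singletons; the change in cost caused by this can be charged edge-by-edge. For each edge $(u,v)$ (positive or negative), the only way its contribution to the objective changes between the two clusterings is if at least one of $u,v$ is unlucky. A positive edge that was within a cluster may become a disagreement (cost $+1$), and a negative edge that was within a cluster becomes an agreement (cost can only decrease). So $\E[\mathrm{cost}(\text{\pivot with pruning})] \le \E[\mathrm{cost}(\text{\pivot})] + \E[\#\{\text{positive edges incident to an unlucky node}\}]$. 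Since $\E[\mathrm{cost}(\text{\pivot})] \le 3\,\opt$ by \citet{pivot}, it remains to show the expected number of positive edges touching unlucky nodes is $O(\opt/k)$.

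The heart of the argument is therefore a bound of the form $\E\big[\sum_{u \text{ unlucky}} \deg(u)\big] = O(\opt/k)$, or more precisely $O(n/k)$ combined with the observation that \pivot's cost already dominates a constant fraction of $n$ in the relevant regime — but the cleaner route is to relate everything to $\opt$ directly. I would bound the number of positive edges incident to unlucky nodes by $\sum_u \deg(u)\cdot\one[\depSize(u)\ge k]$. By Markov applied to each node, $\Pr[\depSize(u)\ge k] \le \E[\depSize(u)]/k$, so the quantity is at most $\tfrac{1}{k}\sum_u \deg(u)\,\E[\depSize(u)]$. Now $\depSize(u)$ counts edges in the recursion tree $\calT_u$, i.e.\ the total number of recursive \cluster-calls spawned when resolving $u$; one shows $\E[\depSize(u)]$ is bounded in terms of quantities that sum nicely over $u$. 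This is exactly where the paper's main technical result \cref{thm:bound-X-Q} (on bounding $\sum X$ and $\sum Q$ over the recursion, the ``query and expensive path counting'') should be the workhorse: it is advertised as showing each edge influences only $O(1)$ other nodes, which is the dual statement to $\sum_u \E[\depSize(u)]$ being $O(m)$ with the right dependence. Combining, $\tfrac{1}{k}\sum_u \deg(u)\,\E[\depSize(u)] = O(m/k)$, and since the cost of \pivot (hence $\opt$, up to the factor $3$) is within a constant factor of $m$ on any instance where the bound is not vacuous — or, more carefully, one re-runs the recursion-tree estimate so that the sum is charged against disagreement edges of \opt rather than against all of $m$ — we obtain $O(\opt/k)$.

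The main obstacle I anticipate is the last step: getting the error term in the form $O(\opt/k)$ rather than $O(m/k)$, since $m$ can be arbitrarily larger than $\opt$ (e.g.\ a clique has $\opt = 0$). Plain Markov on $\depSize(u)$ is too lossy here. The fix is presumably to not bound $\deg(u)$ crudely but to observe that a positive edge $(u,v)$ only becomes a disagreement after pruning if it was an agreement before, i.e.\ $u$ and $v$ were in the same \pivot-cluster, and then charge such edges through the pivot structure: the expected number of such newly-broken agreements incident to unlucky nodes can be tied to the expected recursion-tree size weighted appropriately, which \citet{pivot}-style accounting bounds by $O(\opt)$, and the pruning only activates on a $\le 1/k$ fraction in expectation by the independence/Markov argument applied at the level of the recursion tree. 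Making this charging rigorous — ensuring no double-counting across the two sources of cost increase and that the $1/k$ savings genuinely multiplies the $\opt$-bounded quantity and not the unbounded $m$ — is the delicate part, and I expect it to rely essentially on \cref{thm:bound-X-Q} together with the standard triangle/``bad triple'' LP-free analysis of \pivot.
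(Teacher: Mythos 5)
Your high-level decomposition is correct: the only extra cost from pruning comes from positive edges $(u,v)$ with $\sigma(u)=\sigma(v)$ that get separated because an endpoint is unlucky, so everything reduces to showing the expected number of such edges is $O(\opt/k)$. You also correctly diagnose the danger in the obvious approach: a per-node Markov bound $\Pr[\depSize(u)\geq k]\leq \E[\depSize(u)]/k$ only delivers $O(m/k)$ (or worse, after weighting by degrees), which is incomparable to $\opt$. But the fix you sketch is still Markov in disguise (``pruning only activates on a $\leq 1/k$ fraction in expectation by the independence/Markov argument applied at the level of the recursion tree''), and that is not how the gap is closed. The paper does not use a tail bound at all; it uses a deterministic charging scheme built from a new combinatorial object you never introduce, the \emph{expensive extended query path}.

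Concretely, the missing chain is: (1) In any recursion tree $\calT_v$, each tree node has at most one incident edge not cut by \pivot (\cref{cl:at-most-one} — a node can have at most one non-cut child, its highest-ranked pivot neighbor, and in that case its edge to its parent is cut); by a matching argument (\cref{lem:red-blue}), a tree with $\geq k$ edges therefore contains $\geq\ceil{(k-1)/2}$ \pivot-cut edges. (2) Each such cut tree edge $(b,a)$ yields a distinct expensive EQ-path $(a,b,\dots,v,w)$ ending at the pruned edge $(v,w)$, so every pruned edge is charged to $\geq\ceil{(k-1)/2}$ expensive EQ-paths (\cref{lem:EEQPaths-vs-cut-edges}, \cref{cor:lem:EEQPaths-vs-cut-edges}). (3) The supermartingale argument of \cref{thm:bound-X-Q} shows $\E|\calX|\leq 2\,\E[\#\{\text{edges cut by \pivot}\}]\leq 4\,\opt$. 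Dividing gives the $O(\opt/k)$ bound. Note that \cref{thm:bound-X-Q} does not rescue your Markov step: its first part ($\E|\calQ(a,b)|\leq 2$) would only control $\sum_u\E[\depSize(u)]$ by $O(m)$, and its second part bounds the count of \emph{expensive} paths $\calX$, not all query paths — it is precisely the ``expensive'' qualifier (first edge cut by \pivot, last edge not) that replaces $m$ by $\opt$. Without identifying expensive EQ-paths and the red/blue matching lemma, the argument does not close, as you yourself suspect at the end of your write-up.
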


\citet*{pivot} showed that the approximation factor of \pivot is $3$. By Lemma~\ref{lem:pivot_equival}, the \textsc{Pivot} step (see Algorithm~\ref{alg:sequential-pivot}) is equivalent to the \pivot algorithm. Hence, its approximation factor is also $3$. The pruning step of \textsc{Pivot with Pruning} removes some nodes (namely, unlucky nodes) from their original clusters and puts them into singleton clusters. 
This pruning step can increase the number of pairs of nodes $(u,v)$ disagreeing with the clustering. Note, however, that if $u$ and $v$ are dissimilar (i.e., not connected with an edge), then the pruning step will never make them disagree with the clustering if they agreed with the original clustering. Thus, the pruning step can increase the objective function only by  separating pairs of similar nodes $(u,v)\in E$. In such case, we say that the pruning step cuts edge $(u,v)$. Specifically, edge $(u,v)$ is cut by the pruning step of \textsc{Pivot with Pruning} if $u$ and $v$ are in the same cluster after the \textsc{Pivot} step of the algorithm, but are separated by the pruning step, because $u$, $v$, or both $u$ and $v$ are unlucky nodes. We say that an edge $(u,v)\in E$ is cut by \textsc{Pivot} (without pruning), if \textsc{Pivot} places $u$ and $v$ in distinct clusters. In the next sections, we show Lemma~\ref{lem:pruning-cut-edges} that states that the expected number of edges cut by the pruning step of \textsc{Pivot with Pruning} is upper bounded by the expected number of edges cut by 
\textsc{Pivot}
divided by $\ceil{(k-1)/2}/2$.
The ``triangle-based'' analysis of \pivot by \citet*{pivot} shows that \pivot cuts at most $2\opt$ edges in expectation. Thus, the pruning step cuts at most 
$4\opt/\ceil{(k-1)/2}$ edges in expectation. We conclude that the expected cost of \textsc{Pivot with Pruning} is at most $(3+4/\ceil{(k-1)/2})\opt$. 

\begin{lemma}\label{lem:pruning-cut-edges}
The expected number of edges $(u,v)$ cut by the pruning step of \textsc{Pivot with Pruning} is upper bounded by the expected number of edges cut by \pivot divided by~$\ceil{(k-1)/2}/2$.
\end{lemma}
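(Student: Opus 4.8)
The plan is to bound the expected number of cut edges by charging each edge $(u,v)$ cut by the pruning step to a large set of edges cut by \textsc{Pivot}, in such a way that each \textsc{Pivot}-cut edge is charged only a bounded number of times. Fix the random permutation $\pi$ and hence the \textsc{Pivot} clustering, the settling times $\sigma(\cdot)$, the queried sets $Q(\cdot)$, and the recursion trees $\calT_w$. An edge $(u,v) \in E$ is cut by pruning precisely when $u$ and $v$ share a \textsc{Pivot}-cluster but at least one of them is unlucky; say $u$ is unlucky, so $\depSize(u) \geq k$. The key idea is that an unlucky node's recursion tree $\calT_u$ contains at least $k$ edges, and I want to extract from $\calT_u$ roughly $k/2$ edges of $G$ that are cut by \textsc{Pivot}. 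The natural candidates are the edges of the form $(w, w')$ where $w' \in Q(w)$ appears in $\calT_u$: since $w$ queries $w'$, the pivot to which $w$ is assigned is strictly lower-ranked than $w'$ (equivalently $\pi(w') \le \sigma(w) < \pi(\text{pivot of }w)$ when $w'$ is not $w$'s own pivot), so $w$ and $w'$ land in different \textsc{Pivot}-clusters — hence $(w,w')$ is cut by \textsc{Pivot}. Every non-root node of $\calT_u$ contributes exactly one such edge to its parent, so $\calT_u$ gives $\depSize(u)$ (edge-labeled, with multiplicity) occurrences of \textsc{Pivot}-cut edges; distinct occurrences may repeat the same edge of $G$, but a path in the tree from the root alternates, and one can show each edge of $G$ appears at most twice along any root-to-leaf branch structure, or more carefully, account for multiplicity to get at least $\lceil (k-1)/2 \rceil$ distinct \textsc{Pivot}-cut edges "reachable from" $u$ in the appropriate sense. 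This is where the ceiling and the factor $2$ in the denominator come from.

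Next I would set up the charging formally. For each edge $(u,v)$ cut by pruning, with $u$ (say) unlucky, designate a set $C(u,v)$ of at least $\lceil(k-1)/2\rceil$ edges cut by \textsc{Pivot}, drawn from (a pruned version of) $\calT_u$. Spread a unit of charge from $(u,v)$ evenly over $C(u,v)$, so each member receives at most $2/\lceil(k-1)/2\rceil$. Summing, the total charge equals the number of pruning-cut edges, and it is at most $\bigl(2/\lceil(k-1)/2\rceil\bigr)$ times $\sum_{e \text{ cut by \textsc{Pivot}}} (\text{number of times } e \text{ is charged})$. So the whole argument reduces to showing that each \textsc{Pivot}-cut edge $e$ is charged by at most $O(1)$ pruning-cut edges — ideally by at most one, or by a number that can be absorbed. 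The reason to hope for a bound like this is exactly the "main technical contribution" advertised in the introduction: limiting the total size of the query tree ensures the existence/non-existence of each edge influences only a constant number of nodes. Concretely, if $e = (w,w')$ with $w' \in Q(w)$ is charged by $(u,v)$, then $w$ lies in $\calT_u$; if $e$ were charged by many different cut edges $(u_1,v_1),(u_2,v_2),\dots$, then $w$ would lie in many trees $\calT_{u_i}$, i.e., $w$ would be reachable from many unlucky nodes along short query paths — and the size constraint $\depSize < k$ on the pruned part, together with a counting of how many ancestors a fixed node can have in these bounded-size trees, should cap this.

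The step I expect to be the genuine obstacle is establishing that each \textsc{Pivot}-cut edge is charged $O(1)$ times — i.e., controlling the "reverse" direction of the query trees. Going down from an unlucky $u$ the tree has at most (a controlled number of) edges, but a single low-ranked node $w$ could in principle sit inside the recursion trees of very many higher-ranked nodes, and one must use the pruning threshold to rule this out: only nodes whose own recursive call count is bounded can serve as useful charge-recipients, and for such a node the set of $u$ that can reach it through a valid (pruned) query path must be small. I anticipate the clean way to do this is to charge not from $u$ to arbitrary edges of $\calT_u$ but to a carefully chosen "canonical" sub-collection — e.g., edges on a single deep path, or edges incident to the root's subtree up to size $\lceil(k-1)/2\rceil$ — so that the recipient edge's endpoint determines (up to $O(1)$ choices) the charging source. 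A secondary, more routine obstacle is the combinatorial extraction of $\lceil(k-1)/2\rceil$ \emph{distinct} $G$-edges from a recursion tree with $\ge k$ tree-edges despite node repetitions; handling the worst case (a long path $w_0 \to w_1 \to \cdots$ where consecutive query edges might coincide with the previous one as an undirected edge) is exactly what forces the division by two, and I would dispatch it by observing that an undirected edge cannot appear as a query edge at two consecutive levels (ranks along a query path are monotone in the relevant sense), so among any $k-1$ tree-edges at least $\lceil (k-1)/2\rceil$ are distinct as edges of $G$. Taking expectations at the very end and invoking the stated bound that \textsc{Pivot} cuts at most $2\opt$ edges in expectation then yields the claimed $(3 + 4/\lceil(k-1)/2\rceil)\opt$ and, with $k = \Theta(1/\eps)$, \cref{thm:approx-factor}.
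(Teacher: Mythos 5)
Your opening move matches the paper's: an unlucky node $u$ has a recursion tree $\calT_u$ with at least $k$ edges, and you want to extract $\lceil(k-1)/2\rceil$ \textsc{Pivot}-cut edges from it. But two things go wrong, one minor and one fatal.

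\paragraph{Minor: where $\lceil(k-1)/2\rceil$ comes from.}
You attribute the loss of a factor of two to ``the same $G$-edge appearing at multiple levels.'' That is not the issue. The issue is that \emph{not every edge of $\calT_u$ is cut by \textsc{Pivot}}: if a child $w'$ of $w$ in the tree is exactly the pivot to which $w$ is assigned, then $(w,w')$ is \emph{not} cut (note $Q(w)$ is defined with $\pi(v)\le\sigma(w)$, so $w$'s own pivot is queried). The paper's Claim~\ref{cl:at-most-one} shows each tree node has at most one incident non-cut tree edge, so the non-cut edges form a matching, which yields at least $\lceil(k-1)/2\rceil$ cut tree edges. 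Your premise ``the pivot to which $w$ is assigned is strictly lower-ranked than $w'$'' already fails for the pivot-child, and your subsequent ``at most twice per root-to-leaf path'' counting doesn't recover the right combinatorics.

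\paragraph{Fatal: the multiplicity bound you need does not hold per-realization.}
Your plan is to charge each pruning-cut edge to $\Theta(k)$ \textsc{Pivot}-cut $G$-edges and then argue that each \textsc{Pivot}-cut edge is charged $O(1)$ times, perhaps by canonicalizing the choice. That multiplicity bound is simply false deterministically. Consider a path $v_1 - v_2 - \cdots - v_m$ with $\pi(v_i)=i$: then $\sigma(v_1)=\pi(v_1)=1$, and one checks that every prefix $(v_1,\dots,v_j)$ for $j\ge 2$ is a query path, so the directed edge $(v_1,v_2)$ starts $\Theta(n)$ query paths in this single realization. There is no hope for a fixed-$\pi$ bound on how many unlucky nodes' recursion trees contain a given edge, hence no hope for your per-realization charging scheme. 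The paper sidesteps this by charging each pruning-cut edge to $\lceil(k-1)/2\rceil$ \emph{distinct expensive extended query paths} (Lemma~\ref{lem:EEQPaths-vs-cut-edges} / Corollary~\ref{cor:lem:EEQPaths-vs-cut-edges}) — an injection that \emph{is} valid per-realization precisely because paths (vertex sequences), unlike $G$-edges, are automatically distinct — and then controls the multiplicity only \emph{in expectation}: the supermartingale argument of Theorem~\ref{thm:bound-X-Q} and Lemma~\ref{lem:martingale} shows $\E_\pi|\calQ(a,b)|\le 2$ for every fixed ordered edge $(a,b)$, giving $\E|\calX|\le 2\,\E[\text{\textsc{Pivot}-cut edges}]$. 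This martingale step is the genuine technical core and it has no deterministic, purely combinatorial replacement along the lines you sketched. Your proposal identifies the obstacle correctly but has no mechanism to overcome it.
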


\subsection{Query Paths}\label{sec:querypaths}
Our goal now is to prove Lemma~\ref{lem:pruning-cut-edges}.
\begin{figure}[H]
    \centering
    \tikzset{
every tree node/.style={minimum width=2em,draw,circle, fill=lightgray},
blank/.style={draw=none},
edge from parent/.style={draw,edge from parent path={(\tikzparentnode) -- (\tikzchildnode)}},
level distance=1.5cm,
directed/.style={<-,very thick}
}
\begin{tikzpicture}[scale=0.8]
\Tree
[.\node[fill=black, text=white]{\LARGE{$w$}};
\edge[directed];
[.\node[fill=BrickRed, text=white]
{\LARGE{$v$}};
    [.\; [.\; 
       [.\; ] 
       [.\; ]][.\; 
    [.\; ]
    [.\; ]]]
    \edge[directed];
    [.\node[fill=BrickRed, text=white]{\LARGE{}};
    \edge[directed];
    [.\node[fill=BrickRed, text=white]{\LARGE{$b$}};
    [.\; ] 
    [.\; ] 
    \edge[directed];
    [.\node[fill=BrickRed, text=white]{\LARGE{$a$}};
    [.\; ]
    [.\; ]]
    ]    
    ]
    [.\; [.\; ]
         [.\; ]
         ]
    ]
]
]
\end{tikzpicture}
\caption{This figure shows an \emph{extended query path} in the recursion tree $\calT_v$ for node $v$. The path starts with edge $(a,b)$ goes to the root of the tree, node $v$, and then proceeds to node $w$. The path from $a$ till $v$ is a \emph{query path}. The path from $a$ to $w$ \emph{extends} the path from $a$ to $v$. If edge $(a,b)$ is cut by the pivot step of \pivot but edge $(v,w)$ is not cut, then this path is \emph{expensive}. We call it expensive because if $v$ is unlucky, then $(v,w)$ is cut by the pruning step of \textsc{Pivot with Pruning} and the cost of $(v,w)$ is partially charged to this path.}
\label{fig:eq-path-in-tree}
\end{figure}
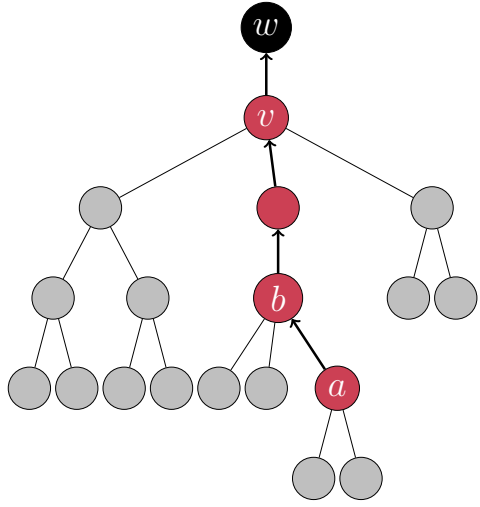

In this section, we define \emph{query paths}, \emph{extended query paths}, and \emph{expensive extended query paths}. We then show that on the one hand, the number of edges cut by the pruning step of the \textsc{Pivot with Pruning} algorithm is upper bounded by the number of expensive extended query paths divided by $\ceil{(k-1)/2}$ (see \cref{cor:lem:EEQPaths-vs-cut-edges}); and, on the other hand, the expected number of expensive extended query paths is upper bounded by two times the expected number of edges cut by the \textsc{Pivot} algorithm (see \cref{thm:bound-X-Q}). This will imply Lemma~\ref{lem:pruning-cut-edges}.

\begin{definition}[Query Paths]\label{def:q-path}
A path $(u_0,u_1,\dots, u_L)$ is a \emph{query path} if each $u_i$ ($i>0$) queries~$u_{i-1}$. 
\end{definition}
\begin{definition}[Extended Query Paths]\label{def:EQ-path}
A path $(u_0,u_1,\dots, u_L)$ of length $L\geq 2$ is an \emph{extended query path} (EQ-path) if the following two conditions hold:
(1) $(u_0,u_1,\dots, u_{L-1})$ is a query path; and  (2) $\pi(u_{L-2}) \leq\sigma(u_L)$.
We say that EQ-path $(u_0,u_1,\dots, u_L)$ is an extension of the query path $(u_0,u_1,\dots, u_{L-1})$.
We also call every path consisting of one edge $(u_0,u_1)$ an extended query path.
\end{definition}
Note that a proper prefix of a query or  extended query path is a query path.

Recall, that for every $u$, we have $\sigma(u)\leq \pi(u)$. Also, a node $u$ queries its neighbor $v$ ($v\neq u)$ if and only if $u$ is not settled before $v$ is processed, i.e., $\sigma(u) \geq \pi(v)$. Thus, $(u_0,u_1,\dots, u_L)$ is a query path if and only if 
\begin{align}
& \sigma(u_0)\leq \pi(u_0)\leq \sigma(u_1) \leq \pi(u_1) \leq\dots \nonumber \\
&\leq\sigma(u_{L-1})\leq\pi(u_{L-1}) \leq\sigma(u_L)\leq\pi(u_L). \label{eq:chain-for-QPath}
\end{align}
Similarly, a path $(u_0,u_1,\dots, u_L)$ of length $L\geq 2$ is an EQ-path if and only if
\begin{align}
& \sigma(u_0)\leq \pi(u_0)\leq \sigma(u_1)\leq \pi(u_1) \leq\dots \nonumber \\
& \leq\sigma(u_{L-2})\leq\pi(u_{L-2})\leq \min(\sigma(u_{L-1}),\sigma(u_L)). \label{eq:chain-for-EQPath}
\end{align}

We will charge all edges cut by the pruning step of \textsc{Pivot with Pruning} to $\Theta(k)$ \emph{expensive EQ-paths} which are defined as follows.
\begin{definition}[Expensive Extended Query Paths]
An extended query path $(u_0,u_1,\dots, u_L)$ is expensive if $\sigma(u_0) < \sigma(u_1)$ but $\sigma(u_{L-1}) = \sigma(u_{L})$. We denote the set of all expensive query paths by $\calX$.
\end{definition}
Note that in every expensive EQ-path, the first edge is cut by \textsc{Pivot} (because $\sigma(u_0) < \sigma(u_1)$) but the last edge is not cut (because $\sigma(u_{L-1}) = \sigma(u_{L})$). 
A path $(u_0,u_1,\dots,u_L)$ is an EQ-path if and only if condition (\ref{eq:chain-for-EQPath}) holds, thus $(u_0,u_1,\dots,u_L)$ is an expensive EQ-path if and only if
\begin{align}
& \sigma(u_0) < \sigma(u_1)\text{ and } \nonumber 
\sigma(u_0)\leq \pi(u_0)\leq \sigma(u_1)\leq \pi(u_1)\leq\nonumber \\
& \dots \leq\pi(u_{L-2})\leq \sigma(u_{L-1}) =\sigma(u_L). \label{eq:chain-for-EEQPath}
\end{align}
The first condition $\sigma(u_0) < \sigma(u_1)$ in~(\ref{eq:chain-for-EEQPath}) can be replaced with 
$\sigma(u_0) \neq \sigma(u_1)$, because we always have $\sigma(u_0) \leq \sigma(u_1)$ if $u_0,\dots,u_{L-1}$ is a query path.

\subsubsection{Charging Cut Edges to Expensive Paths}
We now prove a lemma that establishes a connection between edges cut by the pruning step of \textsc{Pivot with Pruning} and expensive EQ-paths.
\begin{lemma}\label{lem:EEQPaths-vs-cut-edges}
For every unlucky node $v$ and every edge $(v,w)$ with $\sigma(v)=\sigma(w)$, there exist at least $\ceil{(k-1)/2}$ expensive extended query paths that end with edge $(v,w)$.
\end{lemma}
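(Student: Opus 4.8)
The plan is to exhibit, for a fixed unlucky node $v$ and fixed edge $(v,w)$ with $\sigma(v)=\sigma(w)$, a large family of distinct expensive EQ-paths all ending with the edge $(v,w)$. Since $v$ is unlucky, $\depSize(v)\geq k$, so the recursion tree $\calT_v$ has at least $k$ edges. Every edge of $\calT_v$ corresponds to a query path from some leaf-level edge up to the root $v$: concretely, reading the ancestors of an edge $(a,b)$ in $\calT_v$ (with $b$ the child, closer to the leaves, and going up toward $v$) gives a query path $(a, b, \dots, v)$ by the recursive definition of $\calT_v$ via the sets $Q(\cdot)$. Appending $w$ to such a query path gives a path $(a,b,\dots,v,w)$; by condition~\eqref{eq:chain-for-EQPath} and the fact that $\pi(v')\le\sigma(v)$ for the node $v'$ preceding $v$ on the query path (this is exactly the ``queries'' relation), together with $\sigma(v)=\sigma(w)$, the appended path satisfies~\eqref{eq:chain-for-EQPath} and is therefore an EQ-path ending with $(v,w)$. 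So $\calT_v$ already hands us $\ge k$ EQ-paths ending with $(v,w)$; the work is to extract $\ceil{(k-1)/2}$ that are moreover \emph{expensive}, i.e. whose \emph{first} edge $(a,b)$ has $\sigma(a)<\sigma(b)$ (the last-edge condition $\sigma(v)=\sigma(w)$ holds by hypothesis for all of them).

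The key observation is a parity/alternation argument along any root-to-leaf branch of $\calT_v$. Walk down a branch $v = x_0, x_1, x_2, \dots, x_m$ (so $x_{i+1}\in Q(x_i)$, meaning $x_i$ queries $x_{i+1}$, hence $\sigma(x_{i+1})\le\pi(x_{i+1})\le\sigma(x_i)$). Consider the edge $(x_{i+1},x_i)$ as the first edge of a candidate path; it is ``cheap'' (non-expensive as a first edge) exactly when $\sigma(x_{i+1})=\sigma(x_i)$, and ``good'' when $\sigma(x_{i+1})<\sigma(x_i)$. Along the chain $\sigma(x_m)\le\sigma(x_{m-1})\le\dots\le\sigma(x_0)$ the values are nondecreasing as we go up, so once two consecutive $\sigma$'s are equal we may have a run of equalities, but crucially $\sigma(x_{i+1})=\sigma(x_i)$ forces $x_{i+1}$ and $x_i$ to be assigned to the same pivot, and one shows that consecutive tree-edges cannot \emph{both} be cheap --- if $(x_{i+1},x_i)$ and $(x_{i+2},x_{i+1})$ were both cheap then $\sigma(x_{i+2})=\sigma(x_{i+1})=\sigma(x_i)$, meaning $x_{i+2},x_{i+1},x_i$ all share a pivot $p$ with $\sigma(p)=\sigma(x_i)$; but then (using $\sigma(u)\le\pi(u)$ and the query relations) one of these nodes would have been settled before querying the next, contradicting that each queries the following one on this branch. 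Hence along every root-to-leaf branch, cheap edges are ``isolated'': between any two cheap edges there is at least one good edge, so at least half (rounding up) of the edges on the branch are good.

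Therefore, for each of the $\ge k$ edges $(a,b)$ of $\calT_v$, either $(a,b)$ itself is a good (expensive-qualifying) first edge, or we can instead start the path one edge lower --- but the clean way to finish is: count tree-edges $(a,b)$ in $\calT_v$ that are ``good'' (i.e. $\sigma(a)<\sigma(b)$). Summing the per-branch bound (at least $\lceil (\text{branch length})/2\rceil$ good edges, with branches partitioning the tree-edges appropriately, or more simply: every cheap edge has a good edge as its parent-edge in the tree, so the number of good edges is at least the number of cheap edges, hence at least $\ceil{(\depSize(v)+1 - 1)/2} = \ceil{\depSize(v)/2}\ge\ceil{(k-1)/2}$ --- one must be slightly careful at the leaves since a leaf-edge could be cheap with a cheap parent-edge, but the no-two-consecutive-cheap claim rules this out). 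Each good tree-edge $(a,b)$ yields the query path from $a$ up to $v$, which we extend by $w$; the resulting path is an expensive EQ-path ending with $(v,w)$, and distinct tree-edges give distinct paths since the tree-edge $(a,b)$ is recovered as the first edge. This produces the required $\ceil{(k-1)/2}$ expensive EQ-paths.

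The main obstacle I expect is making the ``no two consecutive cheap edges along a branch'' claim fully rigorous: it requires carefully unwinding what $\sigma(x_{i+1})=\sigma(x_i)$ means in terms of the pivot assignment and the settling time, and checking that it is genuinely incompatible with $x_i$ querying $x_{i+1}$ and $x_{i+1}$ querying $x_{i+2}$ (one must rule out the degenerate case where the common pivot is one of the $x_j$'s themselves). A secondary, more bookkeeping-level obstacle is getting the exact constant $\ceil{(k-1)/2}$ right --- in particular handling whether the count should be over all $\ge k$ tree-edges or $\ge k-1$ of them after discarding a boundary case, and ensuring the distinctness of the extracted paths is argued cleanly (the first edge being a faithful ``label'' of the path is the cleanest route).
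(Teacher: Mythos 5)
Your proposal follows essentially the same route as the paper's proof: work in the recursion tree $\calT_v$, show that at least $\ceil{(k-1)/2}$ of its $\geq k$ edges are cut by \pivot, and map each such cut tree-edge to a distinct expensive EQ-path ending in $(v,w)$. The paper states the key structural fact at the node level rather than per branch: at most one edge incident on any node of $\calT_v$ is uncut, because an uncut tree-edge $(u',u'')$ (child $u''$) forces $u''$ to be $u'$'s unique pivot and forces $u'$ to be a non-pivot, so the uncut edges form a matching; a short counting lemma (the paper's ``red-blue'' Lemma~\ref{lem:red-blue}) then gives the bound in one line. Your ``no two consecutive cheap edges along a branch'' plus the injective parent-edge map is an equivalent way of seeing the same matching property, and the resolution of the worry you flag about rigor is exactly this: a cheap edge $(x_{i+1},x_i)$ forces $\pi(x_{i+1})=\sigma(x_{i+1})=\sigma(x_i)$, so $x_{i+1}$ is itself a pivot with $x_i$ assigned to it; if $(x_{i+2},x_{i+1})$ were also cheap then $x_{i+1}$ would be assigned both to itself and to $x_{i+2}\neq x_{i+1}$, a contradiction. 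Two small corrections to the bookkeeping: for your injection you also need ``at most one cheap child per node,'' which your per-branch statement does not by itself supply but which falls out of the same pivot-uniqueness argument (and which the paper's node-level formulation gives for free); and the boundary case you flag lives at the \emph{root} of $\calT_v$ (a cheap edge incident to $v$ has no parent edge to inject into), not at the leaves --- the constant $\ceil{(k-1)/2}$ rather than $\ceil{k/2}$ is precisely what absorbs that loss of one.
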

\begin{proof}
Let $\calT_v$ be the recursion tree for node $v$. We first show that $\calT_v$ contains at least 
$\ceil{(k-1)/2}$ edges cut by \textsc{Pivot} (formally, $\calT_v$ contains copies of edges cut by \textsc{Pivot}). Consider an edge $(u',u'')$ in $T$. Since $(u',u'')$ is an edge in the recursion tree, $u'$ queries $u''$. Thus, $u'$ is assigned to $u''$ (if $u''$ is a pivot) or some neighbor of $u'$ which is lower ranked than $u''$ (if $u''$ is not a pivot). Vertex $u''$ is assigned to itself (if $u''$ is a pivot) or one of its neighbors ranked higher than $u''$ (if $u''$ is not a pivot). Thus, if edge $(u',u'')$ is not cut by \textsc{Pivot}, then $u''$ is a pivot and $u'$ is assigned to 
$u''$. This means that $u''$ is the highest-ranked pivot neighbor of $u'$. Consequently, for every $u'$, there is at most one child node $u''$ such that $(u',u'')$ is not cut. Moreover, if one such $u''$ exists, then $u'$ is not a pivot, and hence the edge from $u'$ to its parent is cut (unless $u'$ is the root). We get the following claim.

\begin{claim}\label{cl:at-most-one} For every node $u$ in the recursion tree $\calT_v$, at most one edge incident on $u$ is not cut by \Pivot.
\end{claim}

Node $v$ is unlucky. Hence, the recursion tree $\calT_v$ must have at least $k$ edges. Therefore, by Claim~\ref{cl:at-most-one} and Lemma~\ref{lem:red-blue} (see below) there are at least $\ceil{(k-1)/2}$ cut edges in $\calT_v$. In Lemma~\ref{lem:red-blue}, red edges are cut edges, and blue edges are not cut edges.

\begin{lemma}\label{lem:red-blue}
Consider a tree $T$ with $k$ edges colored red or blue. Suppose that 
at most one blue edge is incident on each node in $T$. Then, $T$ contains at least $\ceil{(k-1)/2}$ red edges.
\end{lemma}
\begin{proof}
Tree $T$ has $k$ edges and $k+1$ nodes. At most one blue edge 
is incident on each node. So, blue edges form a matching. The size of this matching is at most $\floor{(k+1)/2}$. The number of edges not in the matching is at least $k - \floor{(k+1)/2} = \ceil{(k-1)/2}$. All of them are red.
\end{proof}
Now, for every edge $(b,a)$ in $T$ such that $b$ queries $a$ and $(b,a)$ is cut by \Pivot,
we construct an expensive EQ-path. This path starts with edge $(a,b)$, then goes to the root of tree $T$ -- node $v$ -- along the edges of $T$, and, finally, proceeds to node $w$ (see \cref{fig:eq-path-in-tree}). 
Observe that the subpath from $a$ to $v$ is a query path since each node on the path queries the preceding node. We know that $\sigma(v)=\sigma(w)$. 
Hence, by (\ref{eq:chain-for-EQPath}), the path $(a,b,\dots, v,w)$ is an expensive query path.
\end{proof}

The immediate corollary of this lemma gives us a bound on the number of edges cut by the pruning step.
\begin{corollary}\label{cor:lem:EEQPaths-vs-cut-edges}
The number of edges cut by the pruning step of \textsc{Pivot with Pruning} is at most $|\calX|/\ceil{(k-1)/2}$.
\end{corollary}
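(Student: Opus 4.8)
The plan is to use \cref{lem:EEQPaths-vs-cut-edges} as a black box and charge each edge cut by the pruning step to its own block of at least $\ceil{(k-1)/2}$ expensive extended query paths; since these blocks will turn out to be pairwise disjoint subsets of $\calX$, dividing then gives the claimed bound.

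First I would pin down exactly which edges lie in the set $C$ of edges cut by the pruning step. An edge $\{u,v\}\in E$ is cut by pruning precisely when $u$ and $v$ are in the same \Pivot cluster but are split apart because one (or both) of them is unlucky; and $u,v$ share a \Pivot cluster iff they are assigned to the same pivot, i.e.\ iff $\sigma(u)=\sigma(v)$. Hence $C=\{\{u,v\}\in E:\ \sigma(u)=\sigma(v)\text{ and at least one of }u,v\text{ is unlucky}\}$. For each $e=\{u,v\}\in C$ I would then fix a canonical unlucky endpoint $v_e$ (say, the higher-ranked one when both are unlucky) and let $w_e$ be the other endpoint, so that $v_e$ is unlucky and $\sigma(v_e)=\sigma(w_e)$. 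Applying \cref{lem:EEQPaths-vs-cut-edges} to the node $v_e$ and the edge $(v_e,w_e)$ produces a set $S_e\subseteq\calX$ of at least $\ceil{(k-1)/2}$ expensive extended query paths, each having $v_e$ as its penultimate vertex and $w_e$ as its last vertex.

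The last step is disjointness of the $S_e$: from any path in $S_e$ one reads off its penultimate and last vertices, hence $v_e$ and $w_e$, hence the unordered pair $e=\{v_e,w_e\}$, so a single path cannot belong to $S_e$ and $S_{e'}$ for $e\neq e'$. Therefore
\[
\ceil{(k-1)/2}\cdot |C|\ \le\ \sum_{e\in C}|S_e|\ =\ \Bigl|\,\bigcup_{e\in C} S_e\,\Bigr|\ \le\ |\calX|,
\]
and rearranging gives $|C|\le |\calX|/\ceil{(k-1)/2}$, which is the corollary. There is no genuine obstacle beyond \cref{lem:EEQPaths-vs-cut-edges} itself; the only point that needs care is the tie-break when both endpoints of a cut edge are unlucky — without committing to one canonical orientation the sets $S_e$ would still have the right sizes but could overlap, breaking the count.
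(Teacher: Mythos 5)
Your proposal is correct and follows essentially the same approach as the paper: apply \cref{lem:EEQPaths-vs-cut-edges} to an unlucky endpoint of each cut edge and observe that paths ending in different edges are distinct. (The tie-break worry at the end is overcautious — the sets of expensive EQ-paths ending in $(u,v)$ or $(v,u)$ are determined by the unordered pair $\{u,v\}$, so they are automatically disjoint across distinct cut edges even without fixing a canonical orientation — but your version is equally valid.)
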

\begin{proof}
Every edge~$(v,w)\in E$ cut by the pruning step of \textsc{Pivot with Pruning} is not cut by \textsc{Pivot}. Hence, $\sigma(v)=\sigma(w)$. Moreover, if $(v,w)$ is cut by the pruning step, then $v$, $w$, or both $v$ and $w$ must be unlucky. Thus, by Lemma~\ref{lem:EEQPaths-vs-cut-edges}, there are at least $\ceil{(k-1)/2}$ expensive EQ-paths that end with $(v,w)$ or $(w,v)$. Therefore, there exists at least $\ceil{(k-1)/2}$ unique expensive EQ-paths for each edge $(v,w)$ cut by the pruning step.
\end{proof}

\subsubsection{Expected Number of Expensive EQ-Paths}
We now prove that the expected number of expensive EQ-paths is at most $4OPT$ and the expected number of query paths that start with a fixed directed edge $(a,b)$ -- we denote these paths by $\calQ(a,b)$ -- is at most 2.
\begin{theorem}\label{thm:bound-X-Q}
For every ordered pair $(a,b)$ with $(a,b)\in E$, we have 
$\E_{\pi}|\calQ(a,b)| \leq 2$, and
$$\E_{\pi}|\calX| \leq 2\E\Big[\sum_{(u,v)\in E}\mathbf{1}(\sigma(u)\neq\sigma(v))\Big].$$
\end{theorem}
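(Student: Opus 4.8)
The plan is to analyze both quantities by a union-bound / linearity-of-expectation argument over fixed candidate paths, reducing each to a statement about the probability that a particular chain of $\sigma$ and $\pi$ inequalities holds under a uniformly random permutation $\pi$. For the first bound, I fix a directed edge $(a,b) \in E$ and, for each $L$ and each candidate sequence $a = u_0, u_1, \dots, u_L$ of distinct vertices with consecutive pairs being positive edges, I want to bound $\sum \prob{(u_0,\dots,u_L)\text{ is a query path}}$. By \eqref{eq:chain-for-QPath}, being a query path forces $\sigma(u_1) \geq \pi(u_0) \geq \pi(a)$ and more importantly forces a strictly increasing chain $\pi(u_1) \leq \sigma(u_2) \leq \pi(u_2) \leq \dots$; since $\sigma(u_i) \leq \pi(u_i)$ always, the key observation is that each new vertex $u_i$ added to the path must have $\pi(u_i) \geq \pi(u_{i-1})$ \emph{and} must actually be queried, which (conditioned on the prefix) happens only if $u_i$ is among the not-yet-clustered neighbors up to the settling point. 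The cleanest way to get the constant $2$ is to charge each query path $(a, u_1, \dots, u_L)$ to its last vertex $u_L$ and argue that, for the path to exist, $u_L$ must be the \emph{lowest-ranked} among $\{a, u_1, \dots, u_L\}$ appearing in a nested structure, so that summing a geometric-type series over $L$ telescopes to at most $2$. I would make this precise by exhibiting, for a fixed realization of $\pi$, an injection (or near-injection) from query paths starting at $(a,b)$ into a set of size controlled by $1 + \frac12 + \frac14 + \dots$, e.g. by a deletion/peeling argument on the tree $\calT_a$.

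For the second and main bound, I fix an expensive EQ-path shape $(u_0, u_1, \dots, u_L)$ and use \eqref{eq:chain-for-EEQPath}: it requires $\sigma(u_0) \neq \sigma(u_1)$ (equivalently the first edge $(u_0,u_1)$ is cut by \textsc{Pivot}), the prefix $(u_0,\dots,u_{L-1})$ is a query path, and $\sigma(u_{L-1}) = \sigma(u_L)$ (the last edge is not cut). The strategy is to write $|\calX| = \sum_{(u_0,u_1)} \sum_{\text{paths extending }(u_0,u_1)} \one(\text{expensive})$, peel off the last edge $(u_{L-1}, u_L)$ together with the condition $\sigma(u_{L-1})=\sigma(u_L)$, and observe that the remaining object $(u_0, u_1, \dots, u_{L-1})$ is exactly a query path starting with the cut edge $(u_0, u_1)$. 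Thus $|\calX| \leq \sum_{(u_0,u_1)\in E,\ \text{cut}} \bigl(\text{number of query paths starting with }(u_1,\dots)\text{ reversed, times appendable last edges}\bigr)$; to control the last-edge factor I need that, given a query path ending at $u_{L-1}$, the number of $w$ with $\sigma(u_{L-1}) = \sigma(w)$ and $(u_{L-1}, w) \in E$ and forming a valid EQ-extension is, in expectation after accounting for the permutation, bounded appropriately — in fact $w$ must be a neighbor of $u_{L-1}$ settled at the same step, i.e. assigned to the same pivot with sufficiently high rank, and this ties back to the structure already used. Combining with the first bound ($\E|\calQ(a,b)| \leq 2$) applied to the reversed path starting at the cut edge, and summing over all directed cut edges, gives $\E|\calX| \leq 2\,\E[\#\text{cut edges}] = 2\,\E\bigl[\sum_{(u,v)\in E}\one(\sigma(u)\neq\sigma(v))\bigr]$.

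The main obstacle I anticipate is the interaction between the two ``ends'' of an expensive EQ-path: the first edge must be cut and the last edge must \emph{not} be cut, and these are not independent events because $\sigma$ is a global function of $\pi$. The clean separation I sketched — reverse the path, treat it as a query path starting from the cut edge $(u_0,u_1)$, and handle the no-cut condition $\sigma(u_{L-1})=\sigma(u_L)$ by a bounded-multiplicity argument on the last vertex — is the crux, and getting the constant to be exactly $2$ (rather than some larger constant) will require being careful that the geometric series in the query-path count and the single extra last edge multiply to at most the stated bound. I would prove the query-path bound $\E|\calQ(a,b)| \le 2$ first as a self-contained lemma via a peeling/injection argument on $\calT_a$ valid for \emph{every} fixed $\pi$ (so the bound holds pointwise, not just in expectation), since that makes the reduction for $\calX$ purely combinatorial and lets the expectation pass through at the very end only on the count of cut edges.
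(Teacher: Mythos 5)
Your plan has a genuine gap that breaks both halves of the argument, and the route you sketch is substantively different from the paper's (which uses a supermartingale with a ``dangerous path'' potential, not a per-path union bound).

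The most concrete error is the claim at the end that you could prove $|\calQ(a,b)| \le 2$ \emph{pointwise, for every fixed $\pi$}, via an injection on $\calT_a$. This is false. Consider a star with center $b$, a leaf $a$ with $\pi(a)=1$, and many leaves $c_1,\dots,c_m$ with $\pi(c_i)>\pi(b)$. Then $a$ is a pivot, $b$ queries only $a$ (so $(a,b)$ is a query path), $b$ is not a pivot, and every $c_i$ with $\pi(c_i)>\pi(b)$ queries $b$ before settling on itself. Each $(a,b,c_i)$ is then a query path in $\calQ(a,b)$, so $|\calQ(a,b)|$ is $\Theta(m)$ for this $\pi$. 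The bound $2$ holds only in expectation over $\pi$, and any argument you build for $\calX$ on top of a pointwise bound inherits this flaw. A union-bound over fixed candidate path shapes also doesn't obviously give $2$: the probability that a fixed length-$L$ path is a query path involves the joint distribution of $\sigma$-values which are global functions of $\pi$, and it is not clear how to sum these cleanly.

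The second gap is in the $\calX$ bound. Peeling off the last edge is the right instinct, and it is correct that the prefix of an expensive EQ-path is a query path starting at a cut edge. But the ``last-edge factor'' you need to control --- the number of $w$ with $\sigma(u_{L-1})=\sigma(w)$ and $(u_{L-1},w)\in E$ forming a valid EQ-extension --- is the set of clustered-together neighbors of $u_{L-1}$ and can be large; you never argue why this factor, combined with the (in-expectation) query-path count of $2$, produces the stated constant rather than something larger, and the two expectations do not factor because the first-edge-is-cut event, the query-path event, and the last-edge-is-not-cut event are all correlated through the single random permutation. The paper sidesteps this entirely: it reveals $\pi$ one node at a time, defines a filtration $\calF_t$, tracks a set of ``dangerous'' paths $\calD_t(a,b)$, and shows that $\Phi_t = 2|\calD_t| + |\calQ_t|$ and $\Psi_t = 2|\calD_t| + |\calX_t|$ are supermartingales. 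The factor $2$ comes from $\Phi_0 = 2$ (the one dangerous path $(a,b)$), and the $\calX$ bound follows by stopping $\Psi$ at $\tau = \sigma(a)$, where $\Psi_{\sigma(a)}(a,b)\cdot\one(\sigma(a)<\sigma(b)) = 2\cdot\one(\sigma(a)<\sigma(b))$, and then summing over directed edges. Your proposal lacks any analogue of the martingale/potential idea that makes the multiplicity argument go through, so the constants would not come out without it.
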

We will refer to the time when iteration $t$ of \textsc{Pivot} occurs as time $t$. For the sake of analysis, we shall assume that the ordering $\pi$ is initially (at time $0$) hidden from us and is revealed one node at a time. At the beginning of iteration $t$, we learn the value of $\pi^{-1}(t)$, or, in other words, the identity of the node processed at time $t$. Note that the state of the algorithm after the first $t$ iterations is completely determined by the nodes $\pi^{-1}(1),\dots,\pi^{-1}(t)$. In particular, at time $t$, for every node $u$, we can tell if $u$ is settled by time $t$ and, if it is settled, then we know the value of $\sigma(u)$; otherwise, we know that $\sigma(u)>t$. Let $\calF_t$ be the filtration generated by 
$\pi^{-1}(1),\dots,\pi^{-1}(t)$. 
We will use the standard notation $\Pr[\;\cdot\mid \calF_t]$ and $\E[\;\cdot\mid \calF_t]$ to denote 
the conditional probability and conditional expectation given the state of the algorithm after iteration~$t$. Note that each $\pi(u)$ and $\sigma(v)$ is a \emph{stopping time} with respect to $\calF_t$.

Let $\calP(a,b)$ be the set of all paths that start with edge $(a,b)$. As we run the \textsc{Pivot} algorithm, we add paths to sets $\calQ_t(a,b)$ and $\calX_t(a,b)$.  Loosely speaking, we add a path from $\calP(a,b)$ to $\calQ_t(a,b)$ if we can verify that this path is a query path using condition (\ref{eq:chain-for-QPath}) at time $t$; we add a path from $\calP(a,b)$ to $\calX_t(a,b)$ if we can verify that this path is an expensive EQ-path using condition (\ref{eq:chain-for-EEQPath}) at time $t$. Formally, we add path $(u_0=a,u_1=b,\dots,u_L)$ to $\calQ_t(a,b)$ at time $\pi(u_{L-1})$ 
if condition~(\ref{eq:chain-for-QPath}) holds; and we add path $(u_0=a,u_1=b,\dots,u_L)$ to $\calX_t(a,b)$ at time $\sigma(u_{L-1})= \sigma(u_{L})$ if condition~(\ref{eq:chain-for-EEQPath}) holds.
Thus, $\calQ_t(a,b)$ is the set of all query paths $P\in\calP(a,b)$ for which 
$\pi(u_{L-1})\leq t$; and $\calX_t(a,b)$ is the set of all expensive EQ-paths $P\in\calP(a,b)$ for which $\sigma(u_{L-1})=\sigma(u_{L})\leq t$.
Note that at the times $\pi(u_{L-1})$ and $\sigma(u_{L-1})=\sigma(u_{L})$, we can check conditions~(\ref{eq:chain-for-QPath}) and (\ref{eq:chain-for-EEQPath}), respectively. We also define a set of \emph{dangerous} paths at time $t$, denoted by $\calD_t(a,b)$, as follows.

\begin{definition}[Dangerous EQ-path]
An extended query path $(u_0,\dots, u_L)$ ($L\geq 1$) is \emph{dangerous} at iteration $t$ if $\pi(u_{L-2})\leq t$, $\pi(u_{L-1})>t$, and $\sigma(u_L) >t$. We omit the first condition ($\pi(u_{L-2})\leq t$) for paths of length $1$. Denote the set of all extended query paths that start with edge $(a,b)$ and are dangerous at iteration $t$ by $\calD_t(a,b)$.
\end{definition}

Note that a path $P\in\calP(a,b)$ may become dangerous at some iteration $t$, stay dangerous for some time, but eventually it will become non-dangerous. After that, it will remain non-dangerous until the end of the algorithm. The definition of dangerous paths is justified by the following lemma, which, loosely speaking, says that every query path and every expensive EQ-path is created from a dangerous path.

\begin{lemma}\label{lem:dangerous-to-QX}
Consider a path $P=(u_0,u_1,\dots,u_L)\in\calP(a,b)$. Let $P'=(u_0,u_1,\dots,u_{L-1})$. Then, the following claims hold for every $t\geq 0$:
\begin{itemize}
\item If $P\in \calQ_{t+1}(a,b)\setminus \calQ_{t}(a,b)$, then $P\in\calD_t(a,b)$ but $P\notin\calD_{t+1}(a,b)$.
\item If $P\in \calD_{t+1}(a,b)\setminus \calD_{t}(a,b)$, then $P'\in\calD_t(a,b)$ but $P'\notin \calD_{t+1}(a,b)$.
\item If $P\in \calX_{t+1}(a,b)\setminus \calX_{t}(a,b)$, then $P\in\calD_t(a,b)$ 
or $P'\in\calD_t(a,b)$ but $P\notin\calD_{t+1}(a,b)$ and $P'\notin\calD_{t+1}(a,b)$.
\end{itemize}
\end{lemma}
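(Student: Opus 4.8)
The plan is to verify each of the three bullet points by carefully unpacking the definitions of $\calQ_{t}(a,b)$, $\calD_t(a,b)$, and $\calX_t(a,b)$ in terms of the stopping times $\pi(u_i)$ and $\sigma(u_i)$, and tracking how membership in each set changes as we pass from time $t$ to time $t+1$. The key observation to keep in mind throughout is monotonicity: $\pi(u_i)$ and $\sigma(u_i)$ are fixed (random) values, so a path enters $\calQ_t$, $\calX_t$, or $\calD_t$ at a specific deterministic ``time stamp'' depending on these values. Specifically, $P$ enters $\calQ_{\cdot}(a,b)$ exactly at time $\pi(u_{L-1})$ (provided condition~(\ref{eq:chain-for-QPath}) holds), $P$ enters $\calX_{\cdot}(a,b)$ exactly at time $\sigma(u_{L-1})=\sigma(u_L)$ (provided condition~(\ref{eq:chain-for-EEQPath}) holds), and $P$ is dangerous precisely on the half-open time interval $[\pi(u_{L-2}),\min(\pi(u_{L-1}),\sigma(u_L)))$. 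So the whole lemma reduces to comparing these intervals and thresholds.

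For the first bullet: $P\in\calQ_{t+1}(a,b)\setminus\calQ_t(a,b)$ means $\pi(u_{L-1})=t+1$ and condition~(\ref{eq:chain-for-QPath}) holds. From the chain~(\ref{eq:chain-for-QPath}), the prefix $P'=(u_0,\dots,u_{L-1})$ is a query path, so $\pi(u_{L-2})\le \sigma(u_{L-1})\le\pi(u_{L-1})=t+1$, hence $\pi(u_{L-2})\le t$; also, since $P$ is a query path, $\pi(u_{L-1})\le\sigma(u_L)$, so $\sigma(u_L)\ge t+1>t$, and $\pi(u_{L-1})=t+1>t$. This is exactly $P\in\calD_t(a,b)$. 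And at time $t+1$ the condition $\pi(u_{L-1})>t+1$ fails, so $P\notin\calD_{t+1}(a,b)$. The second bullet is analogous: $P\in\calD_{t+1}(a,b)\setminus\calD_t(a,b)$ can only happen because the threshold $\pi(u_{L-2})$ is crossed, i.e.\ $\pi(u_{L-2})=t+1$ (it cannot be that the upper endpoint moves, as that would make a path leave, not enter, the dangerous set); then one checks that $P'=(u_0,\dots,u_{L-1})$ satisfies $\pi(u_{L-3})\le t$ (from the query-path chain, using $\pi(u_{L-3})\le\sigma(u_{L-2})\le\pi(u_{L-2})=t+1$), $\pi(u_{L-2})=t+1>t$, and $\sigma(u_{L-1})\ge\pi(u_{L-2})=t+1>t$, i.e.\ $P'\in\calD_t(a,b)$; and $\pi(u_{L-2})=t+1$ fails $>t+1$, so $P'\notin\calD_{t+1}(a,b)$. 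One should also handle the short-path boundary case ($L=2$, where the first condition of danger is dropped) separately but it is immediate.

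For the third bullet: $P\in\calX_{t+1}(a,b)\setminus\calX_t(a,b)$ means $\sigma(u_{L-1})=\sigma(u_L)=t+1$ and condition~(\ref{eq:chain-for-EEQPath}) holds. The subtlety here is which of $P$ or $P'$ is the dangerous predecessor, and this depends on whether $\pi(u_{L-1})\le t$ or $\pi(u_{L-1})>t$ at the moment just before $P$ enters $\calX$. If $\pi(u_{L-1})\le t$: then for $P'$, we have from~(\ref{eq:chain-for-EEQPath}) that $\pi(u_{L-3})\le\sigma(u_{L-1})$ is not quite right — rather we use that $(u_0,\dots,u_{L-1})$ is a query path so $\pi(u_{L-3})\le\pi(u_{L-2})\le t$ (indeed $\pi(u_{L-2})\le\sigma(u_{L-1})=t+1$ and in fact $\pi(u_{L-2})\le\pi(u_{L-1})\le t$), $\pi(u_{L-2})\le t$, and $\sigma(u_{L-1})=t+1>t$, so $P'\in\calD_t(a,b)$; and since $\pi(u_{L-2})\le t$ we get... wait, we need $P'\notin\calD_{t+1}$: at time $t+1$, $\sigma(u_{L-1})=t+1$ is not $>t+1$, so indeed $P'\notin\calD_{t+1}(a,b)$. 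If instead $\pi(u_{L-1})>t$, i.e.\ $\pi(u_{L-1})=t+1$ (it cannot exceed $\sigma(u_L)=t+1$ since a query path requires $\pi(u_{L-1})\le\sigma(u_L)$... but here $u_L$ need not be queried by $u_{L-1}$ — rather condition~(\ref{eq:chain-for-EEQPath}) only forces $\pi(u_{L-2})\le\sigma(u_{L-1})=\sigma(u_L)$, so $\pi(u_{L-1})$ could in principle be anything $\ge\pi(u_{L-2})$; still, since $P$ enters $\calX$ at time $t+1$ and stays, and we need it to have been dangerous, the relevant check is): then $P$ itself is dangerous at time $t$, since $\pi(u_{L-2})\le\sigma(u_{L-1})=t+1$ so $\pi(u_{L-2})\le t$ (as $\pi(u_{L-2})\ne t+1$ would require... hmm, one must be a little careful, but $\pi(u_{L-2})<\pi(u_{L-1})=t+1$ forces $\pi(u_{L-2})\le t$), $\pi(u_{L-1})=t+1>t$, and $\sigma(u_L)=t+1>t$, giving $P\in\calD_t(a,b)$. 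In both subcases $P\notin\calD_{t+1}(a,b)$ and $P'\notin\calD_{t+1}(a,b)$ because at time $t+1$ the quantity $\sigma(u_L)=t+1$ (resp.\ $\sigma(u_{L-1})=t+1$) is not strictly greater than $t+1$.

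The main obstacle I anticipate is precisely this case analysis in the third bullet — deciding cleanly whether $P$ or $P'$ is the dangerous ancestor, and verifying that ordering relations like $\pi(u_{L-1})\le t$ versus $\pi(u_{L-1})=t+1$ are exhaustive and consistent with condition~(\ref{eq:chain-for-EEQPath}) (which, unlike the query-path condition, does not constrain $\pi(u_{L-1})$ relative to $\sigma(u_L)$ directly, only through $\pi(u_{L-2})$). I would organize the proof as a short series of sub-claims, one per interval endpoint, and reduce everything to the single principle ``$P$ is dangerous at time $t$ iff $t\in[\pi(u_{L-2}),\min(\pi(u_{L-1}),\sigma(u_L)))$'' (with the first endpoint dropped when $L=2$), after which all three bullets become mechanical endpoint comparisons. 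The boundary/degenerate cases (paths of length $1$ or $2$) should be dispatched first so they do not clutter the main argument.
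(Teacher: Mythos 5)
Your arguments for the first two bullets are correct and track the paper's proof; the principle ``a length-$L\geq 2$ path is dangerous precisely on the interval $[\pi(u_{L-2}),\min(\pi(u_{L-1}),\sigma(u_L)))$'' is a clean way to package those two items. The third bullet, however, has a genuine gap: your case split is on the wrong quantity. You branch on whether $\pi(u_{L-1})\le t$ or $\pi(u_{L-1})>t$, but the first branch is \emph{vacuous}: entering $\calX$ at time $t+1$ means $\sigma(u_{L-1})=\sigma(u_L)=t+1$, and since $\sigma(u_{L-1})\le\pi(u_{L-1})$ always holds, we get $\pi(u_{L-1})\ge t+1>t$ unconditionally. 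Worse, the reasoning you give inside that branch is actually incorrect: you conclude $P'\in\calD_t(a,b)$ from $\pi(u_{L-2})\le t$, $\pi(u_{L-3})\le t$, and $\sigma(u_{L-1})>t$, but the definition of danger applied to $P'=(u_0,\dots,u_{L-1})$ requires $\pi(u_{L-2})>t$, not $\le t$. The branch being vacuous saves you from a wrong conclusion, but it signals the split is not doing the intended work.

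In the remaining branch you then assert $\pi(u_{L-1})=t+1$, which does not follow. You notice this yourself (``$u_L$ need not be queried by $u_{L-1}$'') but then fall back on the equality anyway. The correct dichotomy is on $\pi(u_{L-2})$: since $P'$ is a query path, $\pi(u_{L-2})\le\sigma(u_{L-1})=t+1$, so either $\pi(u_{L-2})\le t$ or $\pi(u_{L-2})=t+1$. In the first sub-case $P$ itself is dangerous at time $t$ (here $\pi(u_{L-1})\ge t+1>t$ and $\sigma(u_L)=t+1>t$). In the second sub-case $\pi(u_{L-2})=t+1>t$ kills the danger condition for $P$, and it is $P'$ that is dangerous at time $t$ (using $\pi(u_{L-3})<\pi(u_{L-2})=t+1$, $\pi(u_{L-2})=t+1>t$, $\sigma(u_{L-1})=t+1>t$); note this second sub-case is nonempty exactly when $\pi(u_{L-1})>t+1$, which your branch rules out by fiat. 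Finally, in both sub-cases the equalities $\sigma(u_L)=t+1$ and $\sigma(u_{L-1})=t+1$ immediately give $P\notin\calD_{t+1}(a,b)$ and $P'\notin\calD_{t+1}(a,b)$. If you rebuild the third bullet around this split on $\pi(u_{L-2})$, the argument becomes the short, mechanical endpoint comparison you were aiming for.
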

We prove this lemma in \cref{app:lem:dangerous-to-QX}.

Our approach to bounding $\E|\calQ_t(a,b)|$ and $\E|\calX_t(a,b)|$ is based on the following idea: At time $t=0$, the path $(a,b)$ is dangerous, and there are no query or expensive EQ-paths that start with $(a,b)$. 
If $P$ is a dangerous EQ-path at time $t$, then at the next iteration, it may be extended to a longer dangerous path, replaced with a query path, and/or created one or more expensive EQ-paths. A dangerous path may also disappear without producing any new dangerous, query, or expensive EQ-paths. For every EQ-path $P$ dangerous at iteration $t$, we will compute the probabilities of creating new paths and derive the desired bounds on $\E|\calQ_t(a,b)|$ and  $\E|\calX_t(a,b)|$.
To make our argument formal, we define two random processes:
\begin{align*}
\Phi_t(a,b) &= 2|\calD_t(a,b)| + |\calQ_t(a,b)|;\\
\Psi_t(a,b) &= 2|\calD_t(a,b)| + |\calX_t(a,b)|.
\end{align*}
We claim that 
$\Phi_t(a,b)$ and $\Psi_t(a,b)$ are supermartingales. That is,
$\E[\Phi_{t+1}(a,b)\mid \calF_t] \leq
\Phi_{t}(a,b)$;
and
$\E[\Psi_{t+1}(a,b)\mid \calF_t] \leq \Psi_{t}(a,b)$.
\begin{lemma}\label{lem:martingale}
Random processes
$\Phi_t(a,b)$ and $\Psi_t(a,b)$ are supermartingales.
\end{lemma}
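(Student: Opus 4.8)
\textbf{Proof plan for Lemma~\ref{lem:martingale}.}

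The plan is to fix the ordered edge $(a,b)$ and analyze the one-step change $\E[\Phi_{t+1}(a,b)\mid\calF_t]-\Phi_t(a,b)$ (and similarly for $\Psi$) by decomposing it into contributions indexed by paths. By \cref{lem:dangerous-to-QX}, every path that newly enters $\calQ_{t+1}(a,b)$ or $\calX_{t+1}(a,b)$, and every path that newly enters $\calD_{t+1}(a,b)$, comes from a path that was dangerous at time $t$ and stops being dangerous at time $t+1$. Conversely, a path dangerous at time $t$ either remains dangerous, or disappears, or ``disappears while spawning'' a query path / one or more longer dangerous paths / expensive EQ-paths. So the natural move is to group the increment $\Phi_{t+1}-\Phi_t$ as a sum over paths $P\in\calD_t(a,b)$ of a local contribution $c_P$, where $c_P$ accounts for the loss of the $2$ that $P$ contributed to $2|\calD_t|$, plus whatever $P$ produces at step $t+1$. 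It then suffices to show $\E[c_P\mid\calF_t]\le 0$ for each such $P$, i.e. that the expected amount of ``new stuff'' created from $P$ is at most $2$, the budget $P$ releases.

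Concretely, condition on $\calF_t$ and consider a path $P=(u_0=a,\dots,u_L)$ dangerous at time $t$; so $\pi(u_{L-2})\le t$, $\pi(u_{L-1})>t$, and $\sigma(u_L)>t$ (with the length-$1$ convention). The relevant randomness is which of the still-unrevealed nodes gets revealed at time $t+1$, i.e. the identity of $\pi^{-1}(t+1)$, together with the resulting updates to the stopping times $\sigma(\cdot)$. The key sub-cases: (i) if $u_{L-1}$ is revealed at time $t+1$ (so $\pi(u_{L-1})=t+1$) and at that point $P$ satisfies~(\ref{eq:chain-for-QPath}), then $P$ enters $\calQ_{t+1}$ — this contributes $+1$ to $|\calQ|$ — and $P$ leaves $\calD$; moreover each edge $(u_L,u_{L+1})$ out of $u_L$ with $\sigma(u_{L+1})>t$ may give a newly dangerous extension $P\cdot u_{L+1}$, contributing $+2$ each to $2|\calD_{t+1}|$; and if instead $\sigma(u_L)$ becomes settled with $\sigma(u_L)=t+1$ we may get expensive EQ-paths. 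The point is that the events ``$P$ becomes a query path'' and ``$P$ spawns a dangerous extension'' are governed by disjoint conditions on the position of $u_L$ relative to $u_{L-1}$ in the ordering, so their probabilities add up to at most $1$ in the right way, and the factor $2$ is exactly what pays for one query path (value $1$) or, alternatively, the continuation. I would set up the bookkeeping so that for the $\Phi$ process each dangerous $P$ at worst converts its $2$ units into: $1$ for a created query path plus $1$ held in reserve, or $2$ distributed among dangerous extensions, or $2$ split as (query path $+$ one dangerous extension is impossible simultaneously) — and check the arithmetic gives expectation $\le 2$. For the $\Psi$ process the same skeleton works, with the third bullet of \cref{lem:dangerous-to-QX} handling the fact that an expensive EQ-path can be created either from $P$ itself or from its prefix $P'$; one has to be slightly careful that the prefix $P'$ releasing its budget is what pays for the expensive path ending one step later, but the structure is parallel.

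I expect the main obstacle to be the case analysis for the expensive-path process $\Psi$: unlike query paths, an expensive EQ-path $P=(u_0,\dots,u_L)$ is ``completed'' at time $\sigma(u_{L-1})=\sigma(u_L)$, which is generally \emph{earlier} than $\pi(u_{L-1})$, so the event creating an element of $\calX$ is tied to $u_{L-1}$ and $u_L$ getting settled simultaneously, not to a node being processed. Matching this against the disappearance of the dangerous prefix (third bullet of \cref{lem:dangerous-to-QX}) and confirming that each expensive path is charged to exactly one dangerous path whose $2$-unit budget covers it — and that no dangerous path is over-charged — is where the delicate conditional-probability estimate lives. The required inequality should reduce to something like: conditioned on $\calF_t$, for a path dangerous at $t$, the probability it produces a query path (resp. an expensive EQ-path) at step $t+1$ plus twice the expected number of dangerous extensions it produces is at most $2$; this in turn follows from the fact that, among the not-yet-revealed nodes, the relative order of $u_{L-1}$, $u_L$ and the neighbors of $u_L$ is uniform, so the ``$u_{L-1}$ comes first'' event (spawning a query path) and the ``some neighbor of $u_L$ comes first'' events partition a probability-$1$ space, each dangerous extension getting probability at most that of one such neighbor being first. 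Once that uniform-order observation is isolated, the rest is routine summation over $P\in\calD_t(a,b)$ using linearity of conditional expectation.
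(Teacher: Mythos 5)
Your overall plan is the same as the paper's: decompose $\Phi_{t+1}-\Phi_t$ (and $\Psi_{t+1}-\Psi_t$) as a sum of local contributions over $P\in\calD_t(a,b)$, use \cref{lem:dangerous-to-QX} to argue these local contributions account for all newly created query/dangerous/expensive paths, and bound each conditional expectation by observing that, conditioned on $P$ ceasing to be dangerous at step $t+1$, the identity of $\pi^{-1}(t+1)$ is uniform over $W_t\cup\{u_{L-1}\}$ where $W_t$ is the set of unsettled neighbors of $u_L$. This is exactly the paper's architecture, and your arithmetic for the $\Phi$-process (budget $2$ paying for at most one new query path plus twice the expected number of new dangerous extensions) works out, modulo one sloppy sentence: the events ``$P$ becomes a query path'' and ``$P$ spawns a dangerous extension'' are \emph{not} governed by disjoint conditions, and their probabilities do \emph{not} partition anything -- both occur simultaneously when $u_{L-1}$ is the node processed at step $t+1$. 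Your later uniform-order estimate is what actually carries the argument, so the $\Phi$ case is essentially fine.

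The $\Psi$ case, however, has a genuine gap. You write that ``the probability it produces \dots an expensive EQ-path at step $t+1$'' plus twice the expected number of new dangerous extensions is at most $2$, which implicitly treats ``produces an expensive path'' as a $0/1$ event. That is false: when $u_{L-1}$ is processed at step $t+1$ and is not yet settled (so it becomes a pivot), it simultaneously settles $u_L$ and every common neighbor of $u_{L-1}$ and $u_L$; at that moment $P$ itself \emph{and} every extension $Pw$ with $w$ among those just-settled neighbors enter $\calX_{t+1}$ together. Bounding this requires replacing ``probability'' with ``expected number'' and then doing the case split that you never carry out: whether $u_{L-1}$ is already settled by time $t$ (in which case $P$ can never become expensive and no extensions become expensive at step $t+1$) versus not settled (in which case one must partition $W_t$ into the neighbors of $u_L$ that are also adjacent to $u_{L-1}$ and those that are not, since the first group yields expensive extensions while the second yields dangerous extensions, and these contribute with different weights, $1$ versus $2$). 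You correctly flag that the expensive-path case is delicate because membership in $\calX$ is triggered by \emph{settling} rather than \emph{processing}, but saying ``the same skeleton works'' and ``the rest is routine summation'' understates what is actually a new case analysis; as written, your argument does not establish the $\Psi$-supermartingale claim.
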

We prove this lemma in Appendix~\ref{app:lem:martingale}.
We now use it 
to finish the proof of \cref{thm:bound-X-Q}. We first upper-bound $E_{\pi}|\calQ(a, b)|$. Fix a directed edge $(a,b)$. At time~$0$, $\Phi_0(a,b)=2$, since $(a,b)$ is a dangerous EQ-path at time~$0$ but $(a,b)\notin \calQ_0(a,b)$. Process $\Phi_0(a,b)$ is a supermartingale. Hence,
$\E[\Phi_n(a,b)]\leq 2$. Note that at time $n$, there are no dangerous EQ-paths because by time $n$ all nodes are processed and settled. Hence, $$\E[\calQ(a,b)] = \E[\calQ_n(a,b)] = \E[\Phi_n(a,b)]\leq 2.$$

We now upper-bound $\E|\calX|$. Every expensive EQ-path $P=(u_0,\dots,u_L)$ starts with a directed edge $(u_0,u_1)$ and at some iteration $t$ is added to the set $\calQ_t(u_0,u_1)$. We have (in the sum below, we have two terms,
$\E|\calX_n(a,b)|$ and $\E|\calX_n(b,a)|$, for every edge $(a,b)\in E$):
\begin{align*}
\E|\calX|&= \sum_{a,b:(a,b)\in E}
\E|\calX_n(a,b)| 
\\
&= 
\sum_{a,b:(a,b)\in E}\E\big[|\calX_n(a,b)|
\cdot \mathbf{1}(\sigma(a)<\sigma(b))\big].
\end{align*}
Here, we used the definition of expensive PQ-paths: In every expensive path $P$ in $\calP(a,b)$, $\sigma(a)<\sigma(b)$. Thus, if $\sigma(a)\geq \sigma(b)$, then $\calX_n(a,b)=\varnothing$.
Observe that $|\calX_n(a,b)|=\Psi_n(a,b)$ and 
$
\E\big[\Psi_n(a,b)\mid 
\calF_{\sigma(a)}]
\leq
\Psi_{\sigma(a)}(a,b)$, because $\Psi_n$ is a supermartingale. Moreover,
\begin{multline*}
\E\big[\Psi_n(a,b)\cdot \mathbf{1}(\sigma(a)<\sigma(b))\mid 
\calF_{\sigma(a)}]
\leq \\ \leq
\Psi_{\sigma(a)}(a,b)\cdot
\mathbf{1}(\sigma(a)<\sigma(b)),
\end{multline*}
because the event $\{\sigma(a)<\sigma(b)\}$ is in $\calF_{\sigma(a)}$, or, in other words, 
at time $\sigma(a)$, we already know the value of $\mathbf{1}(\sigma(a)<\sigma(b))$, and this value does not change over time. Thus,
\begin{align*}
\E|\calX_n(a,b)| &=
\E\E\big[\Psi_n(a,b)\cdot \mathbf{1}(\sigma(a)<\sigma(b))\mid 
\calF_{\sigma(a)}] \\ &\leq
\E\big[\Psi_{\sigma(a)}(a,b)\cdot
\mathbf{1}(\sigma(a)<\sigma(b))\big].
\end{align*}
It remains to compute $\Psi_{\sigma(a)}(a,b)\cdot
\mathbf{1}(\sigma(a)<\sigma(b))$. If $\sigma(a)<\sigma(b)$, then $a$ is not a pivot (otherwise, we would have $\sigma(a)=\sigma(b)=\pi(a)$). Thus, $\pi(a)>\sigma(a)$, and
the only EQ-path  in $\calP(a,b)$
dangerous at time  $\tau=\sigma(a)$ is the path $(a,b)$. Hence,
$|\calD_{\sigma(a)}(a,b)|=1$. Similarly, there are no expensive PQ-paths in $\calX_{\sigma(a)}(a,b)$, because $\sigma(b)>\tau=\sigma(a)$. Therefore, $\Psi_{\sigma(a)}(a,b)\cdot
\mathbf{1}(\sigma(a)<\sigma(b)) = 2\cdot \mathbf{1}(\sigma(a)<\sigma(b))$, and 
$$
\E|\calX|
\leq 2
\E\Big[
\sum_{(a,b)\in E}
\mathbf{1}(\sigma(a)\neq \sigma(b))\Big].
$$

\subsection{Proof of Lemma~\ref{lem:dangerous-to-QX}}\label{app:lem:dangerous-to-QX}
\begin{proof}[Proof of Lemma~\ref{lem:dangerous-to-QX}]
We remind the reader the definitions of query, extended query, expensive extended query, and dangerous paths in Figures~\ref{fig:query-path}, \ref{fig:eq-path}, \ref{fig:expensive-eq-path}, and \ref{fig:dangerous}.

I. We show that if $P$ is included in the set of query paths at time $t+1$, then $P$ is dangerous at time $t$ but not at time $t+1$. If $P\in \calQ_{t+1}(a,b)\setminus \calQ_{t}(a,b)$, then $P$ is a query path, and $\pi(u_{L-1}) = t+1$. Using condition~(\ref{eq:chain-for-QPath}), we get $\pi(u_{L-2}) < \pi(u_{L-1}) = t+1$; $\sigma(u_L)\geq \pi(u_{L-1}) = t+1$. We have $\pi(u_{L-2}) \leq t$ and $\sigma(u_L)\geq \pi(u_{L-1}) > t$. Also, as every query path, $P$ is an extended query path. Thus, $P$ is a dangerous extended query path at time $t$. However, it is no longer dangerous at time $t+1$, because $\pi(u_{L-1}) = t+1$.

\medskip

\noindent II. Now, we show that if $P$ becomes dangerous at time $t+1$, then its prefix $P'$ is dangerous at time $t$ but not time $t+1$. We first note that the length of $L$ cannot be $1$ because paths $(a,b)$ of length $1$ are dangerous from time $0$ till time $\min(\pi(a),\sigma(b))$ i.e., they are dangerous from the very first iteration of the algorithm and thus cannot become dangerous at some later time $t+1$.

If $P\in \calD_{t+1}(a,b)\setminus \calD_{t}(a,b)$, then the following conditions hold: $P$ is an extended query path;
$\pi(u_{L-2}) = t+1$; $\pi(u_{L-1})> t+1$; and $\sigma(u_L) > t+1$.
To see how we derived $\pi(u_{L-2}) = t+1$, observe that $P\in \calD_{t+1}(a,b)$ implies $\pi(u_{L-2}) \le t+1$, but the fact that $P \notin \calD_{t}(a,b)$, together with $\pi(u_{L-1})> t+1$ and $\sigma(u_L) > t+1$, implies $\pi(u_{L-2}) > t$. 
Since $P'$ is a query path by \cref{def:EQ-path}, we have  $\sigma(u_{L-1})\geq \pi(u_{L-2}) = t+1$ (see~(\ref{eq:chain-for-QPath})). This shows that $P'\in \calD_{t}(a,b)\setminus \calD_{t+1}(a,b)$ if the length of $P'$ is $1$, i.e., if $L=2$.
For $L\geq 3$, we need to additionally check that
$\pi(u_{L-3}) \leq t$. This is the case because $\pi(u_{L-3}) < \pi(u_{L-2}) = t+1$.

\medskip

\noindent III. Finally, we show that if $P$ is included in the set of expensive paths at time $t+1$, then $P$ or its prefix $P'$ is dangerous at time $t$ but neither of them is dangerous at time $t+1$.

If $P\in \calX_{t+1}(a,b)\setminus \calX_{t}(a,b)$, then $P$ is an expensive extended query path and $\sigma(u_{L-1}) = \sigma(u_L) = t+1$. Note that $\pi(u_{L-1})\geq \sigma(u_{L-1})=t+1$. Since $P'$ is a query path, we have $\pi(u_{L-2})\leq \sigma(u_{L-1}) = t+1$. Consider two cases. If $\pi(u_{L-2}) \leq t$, then $P$ is dangerous at time $t$ but not at time $t+1$ (because $\sigma(u_L) = t + 1$). If $\pi(u_{L-2}) = t+1$, then $P'$ is dangerous at time $t$ (because $\pi(u_{L-3})< \pi(u_{L-2}) = t+1$). However, $P'$ is no longer dangerous at time $t+1$ (because $\sigma(u_{L-1}) = t + 1$).
\end{proof}

\subsection{Proof of Lemma~\ref{lem:martingale}}\label{app:lem:martingale}
\begin{proof}[Proof of Lemma~\ref{lem:martingale}]
We first analyze process $\Phi_t(a,b)$. By Lemma~\ref{lem:dangerous-to-QX}, if path $P$ is added to set $\calQ_{t+1}(a,b)$ or 
path $Pw$ becomes dangerous at step $t+1$, then $P$ is dangerous at step $t$. Hence,
\begin{multline*}
\Phi_{t+1}(a,b) - \Phi_{t}(a,b)
= -2|\calD_{t}(a,b)\setminus \calD_{t+1}(a,b)|
 + 2|\calD_{t+1}(a,b)\setminus \calD_{t}(a,b)|
+|\calQ_{t+1}(a,b)\setminus\calQ_t(a,b)|
=\\=
\sum_{P\in\calD_t(a,b)} 
-2
\underbrace{
\cdot \mathbf{1}\{P \notin \calD_{t+1}(a,b)\}
}_{\text{no longer dangerous paths}}
+
\underbrace{
\mathbf{1}\{P \in \calQ_{t+1}(a,b)\}
}_{\text{new query paths}}
 + 
2
\sum_{w\in V}
\underbrace{\mathbf{1}\{Pw \in \calD_{t+1}(a,b)\}
}_{\text{new dangerous paths}}
.
\end{multline*}
We show that the conditional expectation of every term in the sum above given $\calF_t$ is non-positive. Consider a path $P=(u_0,\dots,u_L)\in\calP(a,b)$. 
Let
\begin{align*}
\Delta_{t+1}(P)
&=
-2
\cdot \mathbf{1}\{P \notin \calD_{t+1}(a,b)\}
+
\mathbf{1}\{P \in \calQ_{t+1}(a,b)\}
+ 
2
\sum_{w\in V}
\mathbf{1}\{Pw \in \calD_{t+1}(a,b)\}.
\end{align*}

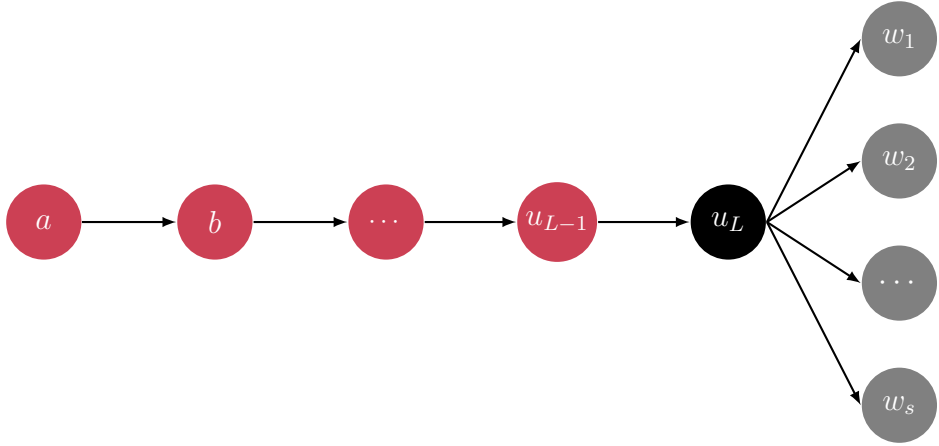
\begin{figure}
    \centering
    \begin{tikzpicture}[scale=0.65,,minimum size=10mm,level distance=35mm,
  sibling distance=25mm,
  every node/.style={fill=BrickRed,text=white, circle,inner sep=2pt},
  arrow/.style={thick,edge from parent/.style={draw,-latex}}
]
\node {\large$a$} child[grow=right,arrow] 
{
node{\large$b$}child[grow=right,arrow] 
{
node{$\cdots$}child[grow=right,arrow] 
{
node{\large$u_{L-1}$}child[grow=right,arrow]{
node[fill=black]{\large$u_{L}$}{
child{node[fill=gray]{\large{$w_s$}}}
child{node[fill=gray]{\large$\cdots$}}
child{node[fill=gray]{\large$w_2$}}
child{node[fill=gray]{\large$w_1$}}
}}}}};
\end{tikzpicture}
    \caption{Illustration for the proof of Theorem~\ref{lem:martingale}. Path $(a,b,\dots,u_{L-1},u_L)$ is a dangerous EQ-path at iteration $t$. At iteration $t+1$, it may become a query path and/or an expensive EQ path. It may also get extended to EQ-paths $Pw$, where $w\in W_t\setminus\{u_{L-1},u_L\}$. These extended paths $Pw$ may be dangerous or expensive at iteration $t+1$, but they also may be non-dangerous and non-expensive at iteration $t+1$.}
    \label{fig:enter-label}
\end{figure}

\begin{figure}
    \centering
    \begin{tikzpicture}[scale=0.65,,minimum size=10mm,level distance=35mm,
  sibling distance=25mm,
  every node/.style={fill=BrickRed,text=white, circle,inner sep=2pt},
  arrow/.style={thick,edge from parent/.style={draw,-latex}}
]
\node {\large$a$} child[grow=right,arrow] 
{
node{\large$b$}child[grow=right,arrow] 
{
node{$\cdots$}child[grow=right,arrow] 
{
node{\large$u_{L-1}$}child[grow=right,arrow]{
node[fill=black]{\large$u_{L}$}{
child{node[fill=gray]{\large{$w_s$}}}
child{node[fill=gray]{\large{$w_{s-1}$}}}
child{node[fill=gray]{\large$\cdots$}}
child{node[fill=gray]{\large$w_2$}}
child{node[fill=gray]{\large$w_1$}}
}}}}};
\draw[thick,color=gray] (11.2,-0.35) --(16.7,-2.5);
\draw[thick,color=gray] (11.1,-0.55) --(16.7,-5);
\end{tikzpicture}
 \caption{Illustration for the proof of Theorem~\ref{lem:martingale}. Path $(a,b,\dots,u_{L-1},u_L)$ is a dangerous EQ-path at iteration $t$. Set $W_t^{(1)}$ contains nodes in $W_t$ that are not neighbors of $u_{L-1}$.
$W_t^{(2)}$ contains nodes in $W_t$ that are neighbors of $u_{L-1}$.
}
    \label{fig:case2}
\end{figure}
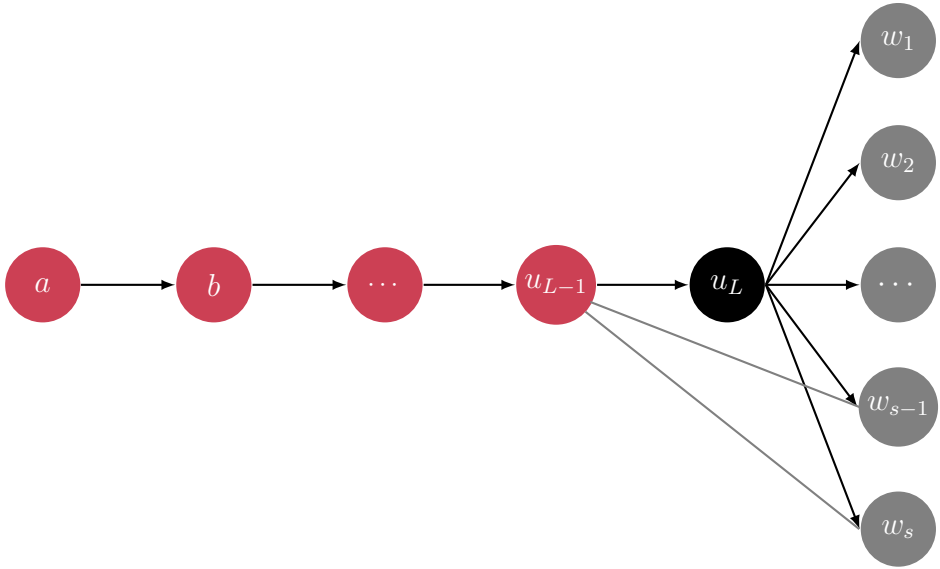

We need to show that
$$
\E\Big[
\Delta_{t+1}(P)
\mid 
\calF_t, P\in\calD_t(a,b)
\Big]\leq 0.
$$
Note that if $P\in \calD_t(a,b)$ and $P\in\calD_{t+1}(a,b)$, then $\Delta_{t+1}(P) = 0$, because 
$\mathbf{1}\{P \in \calQ_{t+1}(a,b)\}
=0$ and $\mathbf{1}\{Pw \in \calD_{t+1}(a,b)\}$ (for all $w$) by 
Lemma~\ref{lem:dangerous-to-QX}; while
$\mathbf{1}\{P \notin \calD_{t+1}(a,b)\}=0$ since $P\in\calD_{t+1}(a,b)$. Thus, it suffices to prove that
$\E\Big[
\Delta_{t+1}(P)
\mid 
\calF_t, 
P\in\calD_t(a,b),
P\notin\calD_{t+1}(a,b)
\Big]\leq 0
$. 
Let $W_t = \big\{w\in N(u_L): \sigma(w) > t\big\}$ be the set of neighbors of $u_L$ that are not settled by iterations $t$. Note that $P$ stops being dangerous at iteration $t+1$ only if  
$u_{L-1}$ is processed or $u_L$ is settled at iteration $t+1$. The latter event occurs if and only if one of the nodes in $W_t$ is processed at iteration $t+1$. Hence, 
\begin{multline*}
\E\Big[
\Delta_{t+1}(P)
\mid 
\calF_t, 
P\in\calD_t(a,b),
P\notin\calD_{t+1}(a,b)
\Big]
=
\E\Big[
\Delta_{t+1}(P)
\mid 
\calF_t, 
P\in\calD_t(a,b),
\pi^{-1}(t+1)\in W_t\cup\{u_{L-1}\}
\Big].
\end{multline*}
Here, $\pi^{-1}(t+1)$ is the node processed at iteration $t+1$. Observe that $\pi^{-1}(t+1)$ is uniformly distributed in $W_t\cup\{u_{L-1}\}$ given that $\pi^{-1}(t+1)\in W_t\cup\{u_{L-1}\}$. Now, if $\pi^{-1}(t+1) =u_{L-1}$, then
$\pi(u_{L-1}) =t+1$, and path $P$ is added to $\calX_{t+1}(a,b)$ (it is an expensive path because 
condition~(\ref{eq:chain-for-QPath}) is satisfied). Moreover, if $\pi^{-1}(t+1) =u_{L-1}$, then
some or all paths $Pw$, where $w\in W_t\setminus \{u_{L-1},u_L\}$, are added to the set of dangerous paths $\calD_{t+1}(a,b)$. Note that no path $Pw'$ with $w'\notin W_t\setminus \{u_{L-1},u_L\}$ is added to 
$\calD_{t+1}(a,b)$ since for 
$w'\in N(u_L)\setminus W_t$ and $w'\notin\{u_{L-1},u_L\}$, we have $\sigma(w')\leq t$, but for $Pw$ in $\calD_{t+1}(a,b)$, we must have $\sigma(w) > t$.
Thus, if $\pi^{-1}(t+1) =u_{L-1}$, then $\Delta_{t+1}(P)\leq -2 +1 + (|W_t|-1)$ (here, we use that $u_L$ always belongs to $W_t$ if $P\in\calD_t(a,b)$). If, however, $\pi^{-1}(t+1) \in W_t\setminus
\{u_{L-1}\}$, then $\Delta_t(a,b)=-2$, because (1) $P$ does not become a query path, and (2) $P$ is not extended to any dangerous paths at iteration $t+1$. We obtain the following bound 
$$\E\Big[
\Delta_{t+1}(P)
\mid 
\calF_t, 
P\in\calD_t(a,b),
\pi^{-1}(t+1)\in W_t\cup\{u_{L-1}\}
\Big]
\leq 
-2 + \frac{1 + 2(|W_t|-1)}{|W_t\cup\{u_{L-1}\}|}<0.
$$
This proves that $\Phi_t(a,b)$ is a supermartingale. We now show that $\Psi_t(a,b)$ is also a supermartingale. Using Lemma~\ref{lem:dangerous-to-QX}, we get
$$\Psi_{t+1}(a,b) -\Psi_t(a,b) = 
\sum_{P\in\calD_t(a,b)} \Delta'(P),
$$
where
\begin{align*}
\Delta'_t(P) 
&=
-2
\underbrace{
\cdot \mathbf{1}\{P \notin \calD_{t+1}(a,b)\}
}_{\text{no longer dangerous paths}}
+2
\underbrace{\sum_{w\in V}
\mathbf{1}\{Pw \in \calD_{t+1}(a,b)\}
}_{\text{new dangerous paths}}
+
\underbrace{
\mathbf{1}\{P \in \calX_{t+1}(a,b)\}
+
\sum_{w\in V}
\mathbf{1}\{Pw \in \calX_{t+1}(a,b)\}
}_{\text{new expensive paths}}
.
\end{align*}
As before, it suffices to show that for every $P\in \calP(a,b)$, we have
$$\E\Big[
\Delta'_{t+1}(P)
\mid 
\calF_t, 
P\in\calD_t(a,b),
P\notin\calD_{t+1}(a,b)
\Big]\leq 0.$$
From the previous argument, we know that $\pi^{-1}(t+1)$ is uniformly distributed in $W_{t}\cup\{u_{L-1}\}$ given $\calF_t$, 
$P\in\calD_t(a,b)$, and
$P\notin\calD_{t+1}(a,b)$. We now consider two cases: $u_{L-1}$ is already settled by iteration $t$
(then $u_{L-1}\notin W_t$) and $u_{L-1}$ is not yet settled (then $u_{L-1}\in W_t$). 

\noindent Case 1: $\sigma(u_{L-1}) \leq t$. In this case,
$P$ may never become an expensive EQ-path, since $\sigma(u_{L-1}) \leq t$, but $\sigma(u_L)> t$. Moreover, if $u_{L-1}$ is processed at iteration $t+1$, then it will not be marked as a pivot, and, consequently, no nodes will be settled at iteration $t+1$. In particular, if $P$ is extended to some EQ-path $Pw$, then this path will not be added to the set of expensive EQ-paths $\calX_{t+1}(a,b)$ (even though it may eventually be added to some set $\calX_{t'}(a,b)$ with $t'>t+1$). However (if $u_{L-1}$ is processed at iteration $t+1$), every path $Pw$ with $w\in W_t\setminus\{u_L\}$  will be added to the set of dangerous EQ-paths $\calD_{t+1}(a,b)$. If another node in $W_t$ -- not $u_{L-1}$ -- is processed at iteration $t+1$, then $P$ is not extended to any other paths. Denote the event 
$\{P\in\calD_t(a,b)\setminus\calD_{t+1}(a,b)\}$
by $\calE_t$. Then,
$$
\E\Big[
\Delta'_{t+1}(P)
\mid 
\calF_t, 
\calE_t,\sigma(u_{L-1}\leq t)
\Big]=
-2 + 2\frac{|W_t|-1}{|W_t|+1} < 0.
$$

\noindent Case 2:  $\sigma(u_{L-1}) > t$. Define two disjoint subsets of $W_t$: 
$W_t^{(1)} = W_t\setminus N(u_{L-1})$ and
$W_t^{(2)} = W_t\cap N(u_{L-1})\setminus\{u_{L-1},u_L\}$. Then,
$W_t$ is the disjoint union of three sets: $W_t^{(1)}$, $W_t^{(2)}$, and 
$\{u_{L-1},u_{L}\}$. 
If $u_{L-1}$ is processed at iteration $t+1$, then (at this iteration) $u_{L-1}$ is marked as a pivot and all nodes in $W_t^{(2)}$ as well as node $u_L$ are settled. Consequently, path $P$ and all paths $Pw$ with $w\in W_t^{(2)}\setminus\{u_L\}$ are added to the set of expensive paths $\calX_{t+1}(a,b)$. No other extensions of $P$ are added to this set. All paths $Pw$ with $w\in W_t^{(1)}$ are added to the set of dangerous paths $\calD_{t+1}(a,b)$. However, paths $Pw$ with $w\in W_t^{(2)}$ are not dangerous by Lemma~\ref{lem:dangerous-to-QX}. Hence, if $u_{L-1}$ is processed at iteration $t+1$, then 
$\Delta'_{t+1}(P) = -2 + 
2|W_t^{(1)}| + (|W_t^{(2)}| + 1)
$. If $u_L$ or any node in $W_t^{(2)}$ is processed at time $t+1$, then path $P$ becomes expensive, but it is not extended to any other EQ-path (since $\sigma(u_L)=t+1 <\pi(u_{L-1})$ and, as a result,
$P$ is not a query path). Hence, 
$\Delta'_{t+1} = -2 + 1$. Finally,
if a node in $W_t^{(1)}$ is processed, then $P$ is not extended to any other paths, and $P$ does not become expensive. Thus, $\Delta'_{t+1} = -2$. Therefore,
\begin{align*}
\E\Big[
\Delta'_{t+1}(P)
\mid 
\calF_t, \calE_t,\sigma(u_{L-1})> t)
\Big]
&=-2 + \frac{2|W^{(1)}_t|+|W^{(2)}_t|+1}{|W_t|} + \frac{|W^{(2)}_t|+1}{|W_t|}\\
&= 
-2 + \frac{2(|W^{(1)}_t|+|W^{(2)}_t|)+2}{|W^{(1)}_t|+|W^{(2)}_t|+2} <0.
\end{align*}
This completes the proof of Lemma~\ref{lem:martingale}.
\end{proof}


\section{MPC Algorithm}

Now we discuss how to simulate \ourPivot in MPC. To that end, observe that \ourPivot has depth at most $k$. 
Moreover, for each node $v$, there are at most $k$ neighbors of $v$, which the computation for $v$ depends on. To find those $k$ neighbors of $v$, we sort all the neighbors of $v$ and select the top $k$ ones. Sorting can be done in $O(1)$ MPC rounds~\cite{goodrich2011sorting} and $O(m)$ total memory for all the nodes simultaneously.

Define a directed graph $H$ on $V$ such that $H$ contains an edge $(u, v)$ if and only if $v$ is among the top $k$ neighbors of $u$. 
The $k$-hop neighborhood of $u$ in $H$ contains all the edges and nodes needed to process $u$ by \ourPivot. Given this, our MPC algorithm simultaneously gathers $k$-hop neighborhood for each $u$. This can be done in $O(\log k)$ MPC rounds and $O(n \cdot k^{k+1})$ total memory via graph exponentiation (see the paper by~\citet{lenzen2010brief}). 
The output for $u$ of \ourPivot is computed on a single machine using its relevant $k$-hop neighborhood.

\section{LCA}\label{sec:LCA}
Given a node $v$, our LCA algorithm simply simulates \ourPivot. The probe complexity of such an approach is almost direct. Namely, the algorithm visits at most $k$ nodes via recursive calls. Each call scans neighbors of the corresponding node to find the $k$-top ones. That scan takes $O(\Delta)$ probes. Therefore, the probe complexity is $O(\Delta \cdot k)$.

It remains to analyze the space complexity and, if any, its effect on the approximation.

\paragraph{Random ordering.}
Our algorithm assumes it has access to a random node-permutation $\pi$. However, it is unclear how to obtain $\pi$ in LCA. So, instead, each node $v$ draws an integer rank $r_v \in [0, n^{10})$ uniformly at random. If values $r_v$ are drawn independently of each other, then, with probability at least $1 - n^{-5}$, for each $u \neq v$ it holds $r_u \neq r_v$. Therefore, the values $r_v$ implicitly define a random permutation, which is enough to simulate our algorithm.
(For now, assume that $r_u \neq r_v$, and at the end of this section, we discuss how to handle the case when $r_u = r_v$ for two nodes.)

\paragraph{Source of randomness.}
If the algorithm has access to an arbitrary long tape of random bits with random access, then whenever a node wants to learn $r_v$, it reads $10 \log n$ bits starting at position $v \cdot 10 \log n$. If such tape is not accessible, then the corresponding local computation algorithm has to store random bits in its memory. It is obvious how to keep all the required random bits in $O(n \log n)$ memory; a node $v$ uses $O(\log n)$ bits independent of other nodes to obtain $r_v$. However, we show that substantially fewer bits suffice for small $\Delta$. To that end, we first recall the definition of $w$-wise independence hash functions and a folklore result about its construction.


\begin{definition}[$w$-wise independent hash functions]
    Let $w, b, N \in \bbN$ and $s$ be a seed of independent random bits. A function $h_s : \{0, 1\}^N \to \{0, 1\}^b$ is a called $w$-wise independent hash function if for any $I \le w$, all distinct $x_1, \ldots, x_I \in \{0, 1\}^N$ and all distinct $y_1, \ldots, y_I \in \{0, 1\}^b$ it holds
    \[
        \prob{\bigwedge_{i = 1}^I h_s(x_i) = y_i} = 2^{-I \cdot b}.
    \]
\end{definition}

\begin{theorem}[Folklore]
Let $w, b, N \in \bbN$. There exists a $w$-wise independent hash function $h_s : \{0, 1\}^N \to \{0, 1\}^b$ with a seed $s$ of length $w \cdot \max\{N, b\}$. Moreover, $h_s$ can be stored using $O(w \cdot (N + b))$ bits of space. 
    
\end{theorem}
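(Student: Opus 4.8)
The plan is to use the classical polynomial construction over a binary extension field. First I would set $\ell := \max\{N, b\}$ and work inside $\mathbb{F}_q$ with $q = 2^\ell$, fixing once and for all a degree-$\ell$ irreducible polynomial over $\mathbb{F}_2$ (which exists and costs $\ell + 1$ bits to record), so that field elements are $\ell$-bit strings and arithmetic is efficient. Since $N \le \ell$, zero-padding gives an injection $\{0,1\}^N \hookrightarrow \mathbb{F}_q$; since $b \le \ell$, the map $\mathrm{trunc}_b \colon \mathbb{F}_q \to \{0,1\}^b$ that keeps the $b$ low-order bits is surjective and exactly $2^{\ell-b}$-to-one --- I will call such a map \emph{balanced}. (If it happens that $2^\ell < w$, I would instead take $\ell := \max\{N, b, \lceil \log_2 w\rceil\}$; this blows up the seed length only by a constant factor in the regimes where the theorem is applied, so I assume $q \ge w$ below.)

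Next I would let the seed $s = (a_0,\dots,a_{w-1})$ consist of $w$ independent uniform elements of $\mathbb{F}_q$ --- exactly $w\ell = w\max\{N,b\}$ random bits --- define $p_s(X) = \sum_{j=0}^{w-1} a_j X^j$, and set $h_s(x) := \mathrm{trunc}_b(p_s(x))$, identifying $x$ with its embedding in $\mathbb{F}_q$. Storing $h_s$ amounts to storing the seed plus the field description, i.e. $w\max\{N,b\} + O(\ell) = O(w(N+b))$ bits, which is the claimed space bound; evaluation is just Horner's rule.

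The heart of the argument is the following claim, which I would prove by linear algebra: for any $I \le w$ and any distinct points $x_1,\dots,x_I \in \mathbb{F}_q$, the tuple $(p_s(x_1),\dots,p_s(x_I))$ is uniform over $\mathbb{F}_q^I$. Indeed, the evaluation map $(a_0,\dots,a_{w-1}) \mapsto (p_s(x_1),\dots,p_s(x_I))$ is $\mathbb{F}_q$-linear with matrix the $I \times w$ Vandermonde-type matrix whose $i$-th row is $(1, x_i,\dots,x_i^{w-1})$; since the $x_i$ are distinct and $I \le w$, its leftmost $I \times I$ minor is a nonzero Vandermonde determinant, so the map has full row rank $I$, is surjective, and has all fibers of the same size $q^{w-I}$ --- hence a uniform input produces a uniform output. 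Then I would push this through $\mathrm{trunc}_b$ applied coordinatewise: a balanced map sends a uniform element of $\mathbb{F}_q$ to a uniform element of $\{0,1\}^b$, and doing this independently in each of the $I$ coordinates shows $(h_s(x_1),\dots,h_s(x_I))$ is uniform over $(\{0,1\}^b)^I$. Concretely, for fixed distinct $y_1,\dots,y_I$ the preimage under $(z_i)_i \mapsto (\mathrm{trunc}_b z_i)_i$ of $(y_1,\dots,y_I)$ has size $(2^{\ell-b})^I$, so its probability is $(2^{\ell-b})^I/q^I = 2^{-Ib}$, which is exactly the defining identity of a $w$-wise independent hash function.

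The step I expect to need the most care is the truncation bookkeeping: verifying that $\mathrm{trunc}_b$ is balanced (trivial once $b \le \ell$) and, more importantly, that composing a uniform $\mathbb{F}_q^I$-valued random variable with a coordinatewise balanced map preserves both uniformity \emph{and} independence across coordinates --- without this one would only recover weaker guarantees. The Vandermonde full-rank fact and the existence of a degree-$\ell$ irreducible polynomial over $\mathbb{F}_2$ are entirely standard, and the only remaining wrinkle is the degenerate case $w > 2^{\max\{N,b\}}$, handled as noted by enlarging $\ell$ to $\max\{N, b, \lceil\log_2 w\rceil\}$.
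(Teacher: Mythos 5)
The paper states this result as folklore and gives no proof of its own, so there is nothing to compare you against; your task was to supply the argument, and you have done so correctly. Your construction is exactly the standard one people have in mind when citing this fact: sample $w$ independent coefficients in $\mathbb{F}_{2^\ell}$ with $\ell = \max\{N,b\}$, evaluate the resulting degree-$(w-1)$ polynomial at the (zero-padded) input, and truncate to $b$ bits. The Vandermonde full-rank argument for uniformity of the evaluation vector and the observation that a coordinatewise balanced map carries a uniform distribution on $\mathbb{F}_q^I$ to a uniform distribution on $(\{0,1\}^b)^I$ are precisely what is needed. Your seed length is $w\ell = w\max\{N,b\}$ as stated, and the extra $O(\ell)$ bits for the irreducible polynomial are absorbed into the $O(w(N+b))$ storage bound.

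One small remark: the degenerate case $w > 2^{\max\{N,b\}}$ that you guard against by enlarging $\ell$ is actually vacuous. The definition quantifies over distinct $x_1,\dots,x_I \in \{0,1\}^N$, and there are only $2^N \le 2^\ell = q$ such points available; whenever $I > q$ the hypothesis cannot be satisfied and the identity holds trivially, and whenever $I \le q$ and $I \le w$ your Vandermonde minor is nonsingular as written. So the original choice $\ell = \max\{N,b\}$ already suffices, though your precaution is harmless. It is also worth noting that you prove the stronger statement for \emph{all} tuples $(y_1,\dots,y_I)$, not merely distinct ones as the paper's definition (somewhat idiosyncratically) demands; this is the right thing to prove and subsumes what is required.
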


Let $h : V \to \{0, 1, 2, \ldots, n^{10} - 1\}$ be a $w$-wise independent hash function; we will set $w$ in the remainder of this section.
Consider an LCA execution of our algorithm from $v$ that, instead of randomly pre-generated $r$-values, uses $h$; let $\ALCA$ refer to that algorithm.
That is, whenever $\ALCA$ needs $\pi(u)$ it uses $h(u)$ instead. So, invoking $\ALCA$ on $v$, the algorithm has to learn the rank-ordering of the neighbors of $v$. 
To achieve that, $\ALCA$ evaluates $h(u)$ on each $u \in N(v)$.

In the process, $\ALCA$ invokes $h$ at most $\Delta \cdot k$ many times.
Hence, if $w \ge \Delta \cdot k$, each invocation of $h$ is entirely independent of the previous invocations of $h$.
Therefore, \emph{from the point of view of $v$}, the execution $\ALCA(v)$ is equivalent to that of executing $\ALCA$ with $r$ values pre-generated using $O(n \log n)$ bits of randomness.

Does $\ALCA$ executed \emph{on all} $v \in V$ resemble our algorithm? Very likely, no! Namely, for $w < n$, we should expect lots of dependencies between execution trees for some, possibly far away, nodes.
Nevertheless, we will argue that, in expectation, the cost of $\ALCA$ and our main algorithm are the same for a proper value of $w$.
To that end, in our main algorithm, let $c_{u, v}$ be the expected cost that pair $\{u, v\}$ incurs.
That is, if $\{u, v\}$ is an edge, then $c_{u,v}$ is the probability that our main algorithm cuts it. If $\{u, v\}$ is not an edge, then $c_{u,v}$ is the probability that our main algorithm does not cut. In particular, the expected cost of our main algorithm is $\sum_{u,v \in V \times V} c_{u, v}$.

Consider $\{u, v\} \in V \times V$.
Then, $\ALCA$ behaves the same with respect to $\{u, v\}$ if $\ALCA(v)$ and $\ALCA(u)$ together resemble the execution of our main algorithm for $v$ and $u$.
This can be achieved by ensuring that the randomness used for $\ALCA(u)$ and $\ALCA(v)$ is independent, which is achieved for $w \ge 2 \cdot \Delta \cdot k$.

In this setup, we also have that in the execution tree for $v$, with probability $1-n^{-5}$ at least, no two ranks are the same. Taking the union bound over all $v \in V$, with probability at least $1 - n^{-4}$, no execution tree has two same ranks.

\paragraph{The case when $r_u = r_v$ for $u \neq v$ in an execution tree.}
As argued above, this case happens with probability at most $n^{-4}$.
We distinguish two cases: when the input graph is a union of cliques and when it is not. In the latter case, the optimum value is at least $1$. Since $r_u = r_v$ happens with probability at most $n^{-4}$ and the maximum possible cost of a clustering is $n^2$, we have that the expected cost in this case is at most $(3 + \eps) OPT + 1/n^2 \le (3 + \eps + 1/n^2) OPT$.

Now, consider the case when the input graph is a union of cliques. In this case, it holds $OPT = 0$, and an additive cost of $1/n^2$ does not yield a $(3+\eps)$ multiplicative one. To handle this case, we add a rule for breaking ties in ranks to our algorithm.
Namely, if a node $x$ has two neighbors $u$ and $v$ such that $r_u = r_v$, then our algorithm ranks $u$ before $v$ iff $u < v$; $u < v$ means that the label of $u$ is smaller than the label of $v$. 
Since each node sees the same neighborhood in a clique -- this neighborhood includes the node itself -- then each node chooses the same pivot even if $r_u = r_v$ for two distinct nodes.
So, each clique is clustered correctly. 
\section{Fully Dynamic Algorithm under Oblivious Adversary}

This section explains how to implement \ourPivot in the fully dynamic setting with an oblivious adversary.
On a high level, when an edge $e$ is updated, our dynamic algorithm simply recomputes the clustering for each node that queries $e$.
As \cref{thm:bound-X-Q} states, there are only $O(1)$ such nodes in expectation. 
This property is the key to enabling us to obtain only $O(k)$ expected amortized update time.

In addition, our algorithm updates the neighbor list of the endpoints of $e$, which is needed to implement \ourPivot.
To obtain the desired running time, we use that \ourPivot visits at most the top $k$ neighbors of a node. 
Therefore, instead of maintaining the entire neighborhood list of a node in a sorted manner, we do it only for the top $k$ neighbors of each node. We show how to dynamically maintain this list in expected amortized $O(\log k)$ time.

We now provide details. Our algorithm maintains the following information for each node~$u$:
\begin{itemize}
    \item $N_k(u)$: Top $k$ neighbors of $u$ kept in an ordered balanced binary tree.
    \item $Q^{-1}_{P}(u)$: The nodes that query $u$, except from $u$ itself, kept in a double-linked list.
    \item $Q_P(u)$: The set of nodes queried by $\ourCluster(\pi, u)$, which is maintained during the recursive calls. 
    Together with every node $w$ queried by $u$, in $Q_P(u)$ is also stored the pointer to where $u$ is in the double-linked list $Q^{-1}_P(w)$. 
    We use these pointers to efficiently update $Q_P^{-1}(\cdot)$.
\end{itemize}

Consider an update of edge $\{a, b\}$, i.e., an edge insertion or removal. This update triggers several updates in the information we maintain for each node.
Without loss of generality, assume that $\pi(a) < \pi(b)$. Note that among all $N_k(\cdot)$, only $N_k(b)$ changes. We show how to update $N_k(b)$ in $O(\log{k})$ expected time in \cref{sec:nk}.

If $N_k(b)$ is changed, but $b$ queries $(b, a)$ neither before nor after this update, then $Q_P$ and $Q^{-1}_P$ structures remain the same as before the update. 
However, if whether $b$ queries $(b, a)$ changes after the update, then all the nodes querying $b$, i.e., those in $Q^{-1}_P(b)$ might change their structures. 
Moreover, only those nodes that query $b$ before the update might change their structures.
To see that, assume that $w$ does not query $b$ before the edge update. First, if $w$ queries $a$, it will not query $(a, b)$ since $\pi(a) < \pi(b)$. Second, and taking into account that $w$ never queries $(a, b)$, if $w$ never reaches $b$ in the invocation of $\ourCluster(w, \pi)$ before the update, then $\{a, b\}$ cannot be queried by $w$ regardless of the update. This yields the following observation.
\begin{observation}
    Consider an edge update $\{a, b\}$ with $\pi(a) < \pi(b)$. Then, only the nodes in $Q^{-1}_P(b) \cup \{b\}$ are those whose maintained data structures can potentially change on the update.
\end{observation}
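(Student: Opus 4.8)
The plan is to turn the informal argument just given into a short induction on rank. The crucial point is an \emph{asymmetry} between $a$ and $b$: although toggling $\{a,b\}$ changes the neighbor lists of both endpoints, the invocation $\ourCluster(a,\pi)$ is oblivious to this edge. Indeed, $\ourCluster(a,\pi)$ scans $a$'s neighbors in increasing-rank order and returns as soon as it reaches $a$ itself; since $\pi(a)<\pi(b)$, the node $b$ lies strictly after $a$ in that scan, so its presence/absence there is immaterial, and no recursive descendant of $a$ can equal $b$ either, since the rank $\pi(\cdot)$ strictly decreases from a node to its children in $\calT_a$ and $\pi(b)>\pi(a)$ forbids $b\in\calT_a$. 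Equivalently, in the \textsc{Sequential Pivot} picture, $b\notin Q(a)$ and $\sigma(a)$ is independent of whether $\{a,b\}\in E$, because $b$ is processed only after $a$ is already settled. Hence $b$ is the \emph{only} node whose local input --- its rank-restricted neighbor list, and therefore $\sigma$, pivot flag, $Q(\cdot)$, recursion tree, $\depSize(\cdot)$ and cluster ID --- is directly perturbed by the update.

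The heart of the proof is the following claim, proved by strong induction on $\pi(u)$: \emph{if $b\notin\calT_u$ in the pre-update instance, then $\ourCluster(u,\pi)$ runs identically before and after the update} --- the same sequence of recursive calls, the same return value --- and therefore $Q(u)$, $\calT_u$, $\depSize(u)$ and the cluster of $u$ are all unchanged, and $b\notin\calT_u$ afterwards as well. The base case $\pi(u)=1$ is immediate. For the step, $u\neq b$ (else $b\in\calT_u$), and by the previous paragraph the part of $u$'s neighbor list consulted by $\ourCluster(u,\pi)$ --- the neighbors of rank higher than $u$ --- is the same in both instances; the case $u=a$ is absorbed here. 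Now replay the loop neighbor by neighbor: every neighbor $v$ processed before the terminating step is queried, hence $v\in\calT_u$, hence $\calT_v\subseteq\calT_u$ avoids $b$, so the induction hypothesis (available since $\pi(v)<\pi(u)$) says that $\ourCluster(v,\pi)$, and in particular $v$'s pivot flag, is unchanged; therefore the loop terminates at the same neighbor in the same way in both runs. Consequently the whole call, and everything derived from it, is unchanged.

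Given the claim, the statement follows by contraposition: any node $u\neq b$ whose maintained clustering data can change must satisfy $b\in\calT_u$ before the update, i.e.\ $u\in Q^{-1}_{P}(b)$. The claim additionally shows that no node \emph{newly} reaches $b$, so the pre-update set $Q^{-1}_{P}(b)$ --- the only one available to the dynamic algorithm --- is the right object to iterate over. The remaining bookkeeping, namely patching the reverse lists $Q^{-1}_{P}(\cdot)$ of the nodes that occur in the freshly recomputed recursion trees, is carried out during those recomputations using the cross-pointers stored alongside $Q_{P}(\cdot)$, and does not enlarge the set of nodes that governs the update time.

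The step I expect to require the most care is the asymmetry claim of the first paragraph: the argument collapses if $\ourCluster(a,\pi)$ is \emph{not} blind to the updated edge, since its obliviousness is exactly what forces the dependency between nodes to flow in a single direction --- outward from $b$ to its queriers. The other delicate point is phrasing the induction hypothesis so that $\sigma(u)$ only ever consults pivot flags of nodes guaranteed to lie \emph{inside} $\calT_u$ (so that the hypothesis applies), which rests on the identity $Q(u)=\{v\in N(u)\setminus\{u\}:\pi(v)\le\sigma(u)\}$ together with the rank-monotonicity of query edges.
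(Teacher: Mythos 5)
Your central insight matches the paper's: $\ourCluster(a,\pi)$ never consults the entry $b$ of $a$'s neighbor list because $\pi(b)>\pi(a)$, so the update of $\{a,b\}$ can be \emph{seen} only from $b$'s side, and an invocation that never reaches $b$ is blind to it. Your remark about the side-effect updates to $Q^{-1}_P(\cdot)$ being absorbed into the recomputations is also right.

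The gap is in the final identification. You prove: if $b\notin\calT_u$ then $\ourCluster(u,\pi)$ is unchanged, and conclude by contraposition that a changed $u\neq b$ has $b\in\calT_u$, ``i.e.\ $u\in Q^{-1}_P(b)$.'' But $\calT_u$ in the paper is the \emph{unpruned} recursion tree, whereas $Q^{-1}_P(b)$ collects the $w$ with $b\in Q_P(w)$, and $Q_P(w)$ records only what the \emph{pruned} $\ourCluster(w,\pi)$ explores before the global $\recCalls$ reaches $k$. Since the pruned exploration is a (possibly strict) prefix of $\calT_w$, one can have $b\in\calT_w$ yet $b\notin Q_P(w)$, i.e.\ $\{u:b\in\calT_u\}\supsetneq Q^{-1}_P(b)$ in general. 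Your contrapositive therefore proves a strictly weaker inclusion than the observation states and does not by itself justify iterating only over $Q^{-1}_P(b)$. If instead you intend $\calT_u$ to be the pruned exploration tree, the inductive step breaks at ``$\calT_v\subseteq\calT_u$'': a fresh call $\ourCluster(v,\pi)$ starts with the full budget $k$, while the nested call inside $\ourCluster(u,\pi)$ runs with a partially spent counter, so the fresh tree can be strictly larger than the subtree of $\calT_u$ rooted at $v$. The paper's (informal) argument avoids this by reasoning about the pruned execution directly: couple the pre- and post-update runs of $\ourCluster(u,\pi)$ step by step; the first divergent step must probe a changed portion of some explored node's higher-ranked neighbor prefix, and since $a$'s such prefix is identical (your asymmetry), that node must be $b$, forcing $b\in Q_P(u)$ within the budget.
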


We remind the reader that $\ourCluster(w, \pi)$ does not use memoization.
In \textsc{\ourPivot-Update$(\cdot,\cdot)$} (see \cref{alg:dynamic}) we provide our update procedure, which uses \cref{alg:update-node} as a subroutine. We analyze its running time in \cref{sec:running-time-dynamic}.

\newcommand{\ANodeUpdate}{\textsc{A-Node-Update}}

\begin{algorithm}[h]
\caption{\ANodeUpdate($w$) \label{alg:update-node}}
\begin{algorithmic}[1]
        \STATE  \textbf{for} every $u \in Q_P(w)$:
        \STATE \ourIndent[1] Remove $w$ from $Q^{-1}_P(u)$ by using the corresponding pointer from $Q_P(w)$.
        \STATE  Invoke \ourCluster($w$,$\pi$) (\cref{alg:our-pivot}) by using $N_k(\cdot)$ as the neighborhood list. During the execution, update $Q_P(w)$.
        \STATE \textbf{for} every $u \in Q_P(w)$:
        \STATE \ourIndent[1] Append $w$ to $Q^{-1}_P(u)$ and record in $Q_P(w)$ the pointer where $w$ is appended to.
    \end{algorithmic}
\end{algorithm}

\begin{algorithm}[h]
\caption{\textsc{\ourPivot-Update$(a,b)$} \label{alg:dynamic}}
\begin{algorithmic}[1]
        \STATE Update $N_k(b)$ as described in \cref{sec:nk}.
        \STATE $\ANodeUpdate(b)$
        \STATE \textbf{if} $b$ queries $(b, a)$: \label{line:if-b-queries}
        \STATE \ourIndent[1] \textbf{for} every $w\in Q^{-1}_P(b)$:
        \STATE \ourIndent[2] $\ANodeUpdate(w)$
    \end{algorithmic}
\end{algorithm}

    


\subsection{Running Time for our Dynamic Algorithm}
\label{sec:running-time-dynamic}

\paragraph{Maintaining $Q^{-1}_{P}(u)$.} 
For each $u \in Q_{P}(w)$, updating double-linked list $Q^{-1}_{P}(u)$ is done in $O(1)$ time -- appending takes $O(1)$ time, while the removal also takes $O(1)$ by using the pointers stored in $Q_{P}(w)$.

Now we upper-bound the expected running time of \textsc{\ourPivot-Update$(a,b)$}.
In that, we use the following claim.
\begin{lemma}\label{lem:S-size}
    The number of nodes in $Q^{-1}_P(b)$ \textbf{processed} by \cref{alg:dynamic} within the \textbf{if} condition is $O(1)$ in expectation. 
\end{lemma}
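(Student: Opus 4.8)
The plan is to bound the expected size of $Q^{-1}_P(b)$ itself (which upper-bounds the number of its nodes processed by the algorithm), and to do this by reducing it to the quantity controlled by \cref{thm:bound-X-Q}. First I would observe that a node $w$ lies in $Q^{-1}_P(b)$ precisely when $w$ queries $b$, i.e., when $(b, w)$ is an edge (in the pruned top-$k$ graph on which the dynamic algorithm operates) and $w$ queries its neighbor $b$. More usefully, every such $w$ is the second vertex $u_1$ of a query path $(u_0 = b, u_1 = w, \dots)$; in fact $w \in Q^{-1}_P(b)$ iff the length-one path $(b, w)$ is a query path, which happens iff $\pi(w) \le \sigma(w)$ and $b \in N(w)$ with $\pi(b) \le \sigma(w)$. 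So I would try to express $|Q^{-1}_P(b)|$ as (at most) the number of query paths of length $1$ starting at the directed edge $(b, w)$, summed over neighbors $w$ of $b$.

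The cleanest route: for each neighbor $w$ of $b$ with $\pi(w) > \pi(b)$ — wait, we actually have $\pi(a) < \pi(b)$ and we care about nodes $w$ that query $b$, so $\pi(b) < \sigma(w) \le \pi(w)$, meaning $w$ is lower-ranked than $b$. For each such directed edge $(b, w)$, the single-edge path $(b, w)$ being a query path is exactly the event that $w$ queries $b$. By \cref{thm:bound-X-Q}, $\E_\pi |\calQ(b, w)| \le 2$ for every directed edge $(b, w) \in E$, and $\calQ(b,w)$ contains the path $(b, w)$ whenever $w$ queries $b$. Hence $\E[\,\mathbf{1}(w \text{ queries } b)\,] = \E[\,\mathbf{1}((b,w) \in \calQ(b,w))\,] \le \E|\calQ(b,w)| \le 2$ — but this is a per-edge bound, not a bound on the sum over all neighbors $w$, so naively summing gives $O(\Delta)$, which is not what we want. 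The key realization must instead be that we should bound $\sum_w \mathbf{1}(w \in Q^{-1}_P(b))$ directly using a single supermartingale-type argument, or — more in the spirit of this paper — observe that the relevant bound is on the number of paths through $b$, and that $b$ appears as the first internal vertex. Let me reconsider: the right statement is that $\sum_{w : (b,w) \in E} \mathbf{1}(w \text{ queries } b)$ has expectation $O(1)$. This should follow because $w$ queries $b$ only if $b$ is the highest-ranked among $w$'s already-processed neighbors at the time $w$ is settled, and more to the point, because the pruning limits the recursion-tree size to $k$: each such $w$ contributes an edge $(w, b)$ to $\calT_w$, and — hmm.

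The approach I would actually commit to: fix the ordering-revelation filtration $\calF_t$ from the proof of \cref{thm:bound-X-Q} and define $Z_t = |\{w : (b,w)\in E, \ w \text{ settled by time } t, \ w \text{ queries } b\}| + (\text{something accounting for } b \text{ not yet settled})$. Since $w$ queries $b$ iff $\sigma(w) \ge \pi(b)$ and $\pi(b)$ is fixed once $b$ is revealed, after time $\pi(b)$ the set of neighbors of $b$ that query $b$ is just $\{w \in N(b) : \sigma(w) \ge \pi(b)\}$, i.e., the neighbors of $b$ not yet settled at time $\pi(b)$. So I want to bound $\E|\{w \in N(b) : \sigma(w) \ge \pi(b)\}|$. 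Now condition on $\pi(b) = t$; a neighbor $w$ of $b$ with $\pi(w) > t$ has $\sigma(w) \ge t$ iff none of $w$'s neighbors was marked a pivot before step $t$. But in the \emph{pruned} graph $N_k(\cdot)$, $b$ only keeps its top $k$ neighbors — no wait, $Q^{-1}_P(b)$ can still be large. The genuinely correct mechanism is \cref{thm:bound-X-Q}'s query-path count combined with the pruning: I expect the intended proof charges each $w \in Q^{-1}_P(b)$ that gets \emph{processed} (not merely present) to a query path, using that processing only happens when $b$ queries $(b,a)$, which constrains $\pi(a) \le \sigma(b)$, so $b$ itself is "deep" in its own recursion and the number of $w$ hanging off it is governed by $\depSize$ bounds plus \cref{thm:bound-X-Q}.

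The main obstacle, and where I'd spend the most care, is exactly this gap between "$w$ is in the list $Q^{-1}_P(b)$" (potentially $\Delta$-many) and "$w$ is processed by \cref{alg:dynamic}" (claimed $O(1)$). The resolution must exploit that the \textbf{if} on line~\ref{line:if-b-queries} only fires when $b$ queries $(b,a)$, and then only the $w$'s whose \emph{output actually changes} — equivalently, whose recursion reaches the edge $\{a,b\}$ — do meaningful work; combined with the fact that a $w$ reaching $\{a,b\}$ means $(b,w,\dots)$ or rather $w$'s recursion tree $\calT_w$ contains edge $(b', a)$ for some copy, so $w$ sits at the top of a query path passing through $b$ to $a$. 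The count of such $w$ is then at most the number of query paths ending at the directed edge $(b, a)$, i.e., $|\calQ(b,a)|$ reversed — more precisely the number of query paths whose \emph{last} edge is $(a, b)$ or whose recursion-tree contains $b \to a$ — and \cref{thm:bound-X-Q} (applied to $\calQ(a,b)$, the query paths \emph{starting} with $(a,b)$, after observing the tree-path/query-path correspondence) bounds this by $2$ in expectation. I would therefore: (i) characterize precisely which $w \in Q^{-1}_P(b)$ do non-trivial work as those whose recursion, before or after the update, queries the edge $\{a,b\}$; (ii) set up an injection from such $w$ into query paths starting at the directed edge $(a,b)$ (namely the path up the recursion tree $\calT_w$ from the copy of $a$ to the root $w$, reversed); (iii) invoke $\E_\pi|\calQ(a,b)| \le 2$ from \cref{thm:bound-X-Q}; (iv) conclude the expected count is $O(1)$, noting the pre-update and post-update orderings differ only in whether $\{a,b\} \in E$ so the bound applies to both configurations and we lose at most a factor $2$.
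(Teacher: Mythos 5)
Your final committed route (items (i)--(iv)) lands on the same mechanism as the paper: map relevant $w$'s to query paths in $\calQ(a,b)$ and invoke $\E_\pi|\calQ(a,b)|\le 2$ from \cref{thm:bound-X-Q}. But there is a conceptual gap in how you get there, and it traces to a misreading of the algorithm. You frame the ``main obstacle'' as a gap between ``$w$ is in the list $Q^{-1}_P(b)$'' (potentially $\Delta$-many) and ``$w$ is processed'' ($O(1)$), and you then try to restrict attention to $w$'s that ``do non-trivial work.'' That is not what the algorithm does: once the \textbf{if} on line~\ref{line:if-b-queries} fires, the for-loop calls $\ANodeUpdate(w)$ on \emph{every} $w\in Q^{-1}_P(b)$, so the lemma genuinely requires bounding $\E\bigl[|Q^{-1}_P(b)|\cdot\mathbf{1}(b\text{ queries }(b,a))\bigr]$, not the subset whose answer changes.

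The missing observation, which the paper states in one line, is that when $b$ queries $(b,a)$, \emph{every} $w$ that queries $b$ also queries $(b,a)$: any copy of $\ourCluster(b,\pi)$ invoked in $w$'s recursion tree behaves the same as the top-level call and hence queries $a$. Thus the set of $w$'s whose recursion touches $(b,a)$ is not a proper subset of $Q^{-1}_P(b)$ conditioned on the \textbf{if} firing --- it is all of it (plus $b$). Once you have that identity, your step (ii) injection (reversing the recursion-tree path from the copy of $a$ up to $w$ to get a member of $\calQ(a,b)$) already covers the entire set and the $\E|\calQ(a,b)|\le 2$ bound finishes it; no appeal to ``only the $w$'s whose output changes'' is needed or justified. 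One further point the paper flags that you omit: \cref{thm:bound-X-Q} is proved for the unpruned \pivot, whereas the dynamic algorithm runs \ourCluster with a call budget of $k$; pruning can only shrink the set of nodes whose recursion reaches $(b,a)$, so the bound transfers. Your remark in (iv) about losing a factor of $2$ across pre-/post-update configurations is a reasonable extra care that the paper elides, but it is not the crux.
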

\begin{proof}
    If $b$ queries $(b, a)$, then each node that queries $b$ also queries $(b, a)$. Hence, the number of nodes in $Q^{-1}_P(b)$ equals the number of nodes querying $(b, a)$. By \cref{thm:bound-X-Q}, that number in expectation is at most $2$.
    Observe that \cref{thm:bound-X-Q} provides an upper-bound for \pivot. Hence, \ourCluster might query $(b, a)$ only less frequently.
\end{proof}

\begin{lemma}
    \textsc{\ourPivot-Update$(a,b)$} takes $O(k)$ amortized time in expectation per an edge update.
\end{lemma}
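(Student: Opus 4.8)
The plan is to read off the three pieces of work performed by \textsc{\ourPivot-Update$(a,b)$} in \cref{alg:dynamic}: the update of $N_k(b)$ on line~1, the single call $\ANodeUpdate(b)$ on line~2, and the calls $\ANodeUpdate(w)$ over $w\in Q^{-1}_P(b)$ on line~5, which are executed only when $b$ queries $(b,a)$. For the first piece I would simply cite the bound proved in \cref{sec:nk}, which maintains $N_k(b)$ in $O(\log k)$ expected amortized time; this is the only source of the word ``amortized'' in the statement, and I would point that out so the reader does not look for it elsewhere.

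Next I would show that a single invocation of $\ANodeUpdate(w)$ runs in $O(k)$ time, deterministically. The two \textbf{for} loops of \cref{alg:update-node} iterate over $Q_P(w)$ and do $O(1)$ work per element --- removals and appends to the double-linked lists $Q^{-1}_P(\cdot)$ cost $O(1)$ via the pointers cached in $Q_P(w)$, as already noted --- so their cost is $O(|Q_P(w)|)$. The call \ourCluster$(w,\pi)$ costs $O(k)$: the global $\recCalls$ cap forces its recursion tree to have at most $k$ edges, hence at most $k+1$ visited nodes, and at each visited node the loop over $N_k(\cdot)$ performs exactly one iteration per recursive call it spawns plus one terminating iteration, so the iterations telescope to $O(k)$ overall (each neighbor-list access being $O(1)$ for the sorted representation of $N_k(\cdot)$). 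In particular $|Q_P(w)|\le k$, so $\ANodeUpdate(w)=O(k)$; applying this with $w=b$ handles line~2.

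Finally I would bound the total number of \ANodeUpdate calls. There is one call for $b$; the remaining ones, on line~5, occur only in the event that $b$ queries $(b,a)$, and in that event their number is exactly $|Q^{-1}_P(b)|$, which is $O(1)$ in expectation by \cref{lem:S-size}. Since the per-call cost $O(k)$ is a deterministic bound, the expected cost of line~5 is $O(k)\cdot\E\big[|Q^{-1}_P(b)|\cdot\mathbf{1}(b\text{ queries }(b,a))\big]=O(k)$, and no care about correlations is needed. Summing the three pieces gives $O(\log k)+O(k)+O(k)=O(k)$ expected amortized time per update.

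The one place that needs genuine care is the appeal to \cref{lem:S-size}, and behind it to \cref{thm:bound-X-Q}: its proof uses that whenever $b$ queries $(b,a)$, every node querying $b$ also queries $(b,a)$, so that $|Q^{-1}_P(b)|$ is at most the number of query paths starting with the directed edge $(b,a)$; and that the bound $\E_\pi|\calQ(b,a)|\le 2$ still applies, which it does because under an oblivious adversary the permutation $\pi$ (equivalently, the rank values) is drawn independently of the fixed update sequence, so for every intermediate graph it is still a uniformly random permutation. I would also state explicitly that \ourCluster is a pruned version of \pivot and therefore can only query $(b,a)$ \emph{less} often than \pivot, so that the \pivot bound of \cref{thm:bound-X-Q} is a valid upper bound here without re-deriving anything for the pruned process.
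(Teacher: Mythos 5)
Your proposal is correct and follows essentially the same decomposition as the paper's own proof: bound the $N_k(b)$ maintenance via \cref{sec:nk}, bound each $\ANodeUpdate$ call by $O(k)$ using the $\recCalls$ cap and the $O(1)$-per-element list updates, and bound the expected number of $\ANodeUpdate$ calls on line~5 by $O(1)$ via \cref{lem:S-size}. Your added remarks — that the deterministic per-call bound makes the expectation argument immune to correlations, and that the oblivious adversary guarantees $\pi$ is uniform so \cref{thm:bound-X-Q} applies — are sound and merely make explicit what the paper leaves implicit.
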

\begin{proof}
    We show that maintaining $N_k(b)$ takes $O(\log{k})$ amortized time in expectation in \cref{sec:nk}.  

    By \cref{lem:S-size}, in expectation, only $O(1)$ nodes $w$ are processed within the \textbf{if} condition. In addition to them, $b$ is also processed. For each such $w$ or $b$ the following is performed:
    \begin{itemize}
        \item Invocation of \ourCluster. This invocation visits $O(k)$ nodes. To traverse neighbors of a node $u$, \ourCluster uses $N_k(u)$. Even though $N_k(u)$ is organized as a binary balanced tree, all its nodes can be traversed in the rank-decreasing order in $O(1)$ time per node.
        \item Updating $Q^{-1}_P(u)$ for $u \in Q_P(w)$. As described above, this is done in $O(1)$ per $u$. By design of \ourPivot, we have $|Q_P(w)| \le k$.
    \end{itemize}
\end{proof}

\subsection{Maintaining \texorpdfstring{$N_k(u)$}{Nk(u)}}\label{sec:nk}
This section describes how to dynamically maintain $N_k(u)$ with $O(\log k)$ update time in expectation. To maintain $N_k(u)$, as the first step, the algorithm organizes the neighbors of $u$ as we describe next.
At each algorithm step, we associate $\td_u$ with $u$. In particular, $\td_u$ is such that the current degree of $u$ is in the range $(1/4 \cdot \td_u, 4 \cdot \td_u)$. In other words, $\td_u$ is a $4$ approximation of the degree of $u$. 
Moreover, all the neighbors of $u$ are placed in $b_u = \ceil{\td_u / (80 k)}$ buckets numbered $1, \ldots, b_u$. A neighbor $w$ of $u$ is placed in the bucket $j$ such that $(j - 1) \cdot n / b_u < \pi(w) \le j \cdot n / b_u$. 
Hence, a bucket corresponds to $n / b_u$ consecutive integers. This is convenient as if $u$ has $d(u)$ neighbors, then under randomly chosen $\pi$ it holds that in expectation $d(u) / n \cdot n / b_u = d(u) / b_u $ neighbors are in a given bucket. With at least two buckets, this ratio is in the range $(10 k, 320 k)$. When there is a single bucket only, it has at most $320 k$ elements by definition of $b_u$ and $\td_u$.

Each bucket is organized as an ordered balanced binary tree. Therefore, if a bucket contains $t$ elements, then insertion, deletion, and finding the $i$-th-rank node can be done in $O(\log t)$ time.
We use $\calB_u$ to refer to these buckets for node $v$. We note that the number of buckets might change over time. We discuss that towards the end of this section.

\paragraph{Edge insertion.}
If a new edge $\{u, x\}$ is inserted, then the algorithm adds $x$ to the bucket in $\calB_u$ corresponding to $\pi(x)$.
In expectation, that bucket has $\Theta(k)$ nodes. Hence, this operation is done in $O(\log k)$ time in expectation. 
Our algorithm also checks whether $x$ has a higher rank than the lowest rank node in $N_k(u)$. If that is the case, it removes the smallest-rank node from $N_k(u)$ and inserts $x$. This is done in $O(\log k)$ time.

\paragraph{Edge deletion.}
If an edge $\{u, x\}$ is deleted, then the algorithm first removes $x$ from the bucket corresponding to $\pi(x)$. In expectation, that bucket has $\Theta(k)$ nodes. Hence, by the concavity of the $\log$ function, this operation is done in $O(\log k)$ time in expectation.
Second, if $x$ does not belong to $N_k(u)$, then the algorithm does nothing else. Otherwise, the algorithm removes $x$ from $N_k(u)$ and finds the $k$-th highest-ranked element in $\calB_u$. It does so in the following way: it visits bucket by bucket in the decreasing order of ranks until it reaches a bucket containing the desired element. We now analyze the complexity of this search.

Let $Y$ be a random variable representing the cost of this search. Let $Y_B$ be the time spent searching bucket $B$; we count $1$ even if $B$ is accessed but empty. Then, $Y = \sum_{B \in \calB_u} Y_B$. Let $B_j$ denote the $j$-th bucket in $\calB_u$. 
Let $Z_i$ be the event that the buckets $B_1 \ldots B_i$ contain less than $k$ elements in total.
We have
\begin{align*}
 \EE{Y_{B_j}} &= \EE{Y_{B_j}\ |\ Z_{j - 1}} \cdot \prob{Z_{j - 1}} 
 + \EE{Y_{B_j}\ |\ \neg Z_{j - 1}} \cdot \prob{\neg Z_{j - 1}}.
\end{align*}
Observe that $\EE{Y_{B_j}\ |\ \neg Z_{j - 1}} = 0$, as no search is performed on $B_j$ if the buckets $B_1 \ldots B_{j - 1}$ contain at least $k$ elements.
This effectively implies that
\begin{equation}\label{eq:E[Y_B_j]}
    \EE{Y_{B_j}} = \EE{Y_{B_j}\ |\ Z_{j - 1}} \cdot \prob{Z_{j - 1}}.
\end{equation}

Also, we have that 
$$\EE{|B_j| \ |\ Z_{j - 1}} \in O(\td_u / (b_u - (j - 1))).$$
Hence, 
\begin{equation}\label{eq:E[Y_B_j]-given-Z_{j-1}}
    \EE{Y_{B_j}\ |\ Z_{j - 1}} \in O(1 + \log{(\td_u / (b_u - (j - 1)))}).    
\end{equation}

Next, we upper-bound $\prob{Z_{j - 1}}$. Definition of $Z_i$ implies
\begin{align}
\label{eq:expanding-Z_i}
\prob{Z_{i}} &\le \prob{|B_{i}| < k \ |\ Z_{i - 1}} \cdot \prob{Z_{i - 1}} 
\\
&=
\prod_{t = 1}^{i} \prob{|B_t| < k\ |\ Z_{t - 1}}.    
\end{align}
For $i = 1$, we have $\EE{|B_1|\ |\ Z_{0}} \ge d(u) / b_u \ge 10 k$; the latter inequality follows by our discussion above. 
For $i > 1$, we have 
$$\EE{|B_i|\ |\ Z_{i - 1}} \ge \frac{d(u) - k}{b_u - (i - 1)} \ge \EE{|B_1|\ |\ Z_{0}}.$$
The latter inequality can be easily verified algebraically, but also it is an easy observation that it holds as the first $i - 1$ buckets in expectation contain at least $10 k (i - 1)$. Hence, $d(u) - k$ nodes distributed over $b_u - (i - 1)$ buckets yield the bucket-average higher than $10 k$.

Condition on $Z_{i - 1}$. Then, let $X_w$ be a $0/1$ random variable that equals $1$ if and only if the neighbor $w$ of $u$ is in $B_i$; in particular, $\EE{|B_i|\ |\ Z_{i - 1}} = \EE{\sum_{w \in N(u)} X_w\ |\ Z_{i - 1}}$. Observe that the random variables $X$ are negatively correlated. Hence, we can apply Chernoff bound to upper-bound the probability that $|B_i| < k$. Since $\EE{|B_i|\ |\ Z_{i - 1}} \ge 10 k$, we have that 
\[
    \prob{|B_i| < k \ |\ Z_{i - 1}} \le e^{-k}.
\]
Plugging this into \cref{eq:expanding-Z_i}, we derive
\begin{equation}\label{eq:Pr-Z_i}
    \prob{Z_i} \le e^{-ik}.
\end{equation}
We now turn back to computing $\EE{Y}$. By plugging \cref{eq:E[Y_B_j]-given-Z_{j-1}} and \cref{eq:Pr-Z_i} into \cref{eq:E[Y_B_j]}, we obtain
\begin{align}\label{eq:dynamic-Nk-final-expression}
\EE{Y} &= \sum_{j = 1}^{b_u} \EE{Y_{B_j}} 
\\
&\le\notag
\sum_{j = 1}^{b_u} e^{-k (j - 1)}  \cdot O\rb{1 + \log{(\td_u / (b_u - (j - 1)))}}.
\end{align}
To upper bound \cref{eq:dynamic-Nk-final-expression}, we first let $t_u = \td_u / b_u$. By definition, $t_u \in O(k)$. 
Observe that 
$$b_u \le (b_u - (j - 1)) \cdot j$$
when $j$ ranges in the interval $[1, b_u]$; the minimum is achieved for $j = 1$ and $j = b_u$.
This implies that $$\td_u / (b_u - (j - 1)) = t_u \cdot b_u / (b_u - (j - 1)) \le t_u \cdot j.$$ 
From \cref{eq:dynamic-Nk-final-expression}, this yields the upper-bound
\begin{align*}
    \EE{Y} & \le O\rb{\sum_{j = 1}^{b_u} e^{-k (j-1)}  \cdot \rb{1 + \log{\rb{k j}}}} \\
    &= O(1) + O\rb{\sum_{j = 1}^{b_u} e^{-k (j-1)}  \log{k}} \\
    &\quad + O\rb{\sum_{j = 1}^{b_u} e^{-k (j-1)}  \log{j}} \\
\end{align*}
\begin{align*}    
    & \le O(1) + O\rb{\log{k}} + O\rb{\sum_{j = 1}^{b_u} e^{-j}  j} \\
    & \le O(1) + O\rb{\log{k}} + O\rb{1} \\ 
    & = O\rb{\log{k}},
\end{align*}
as desired.
To upper-bound $O\rb{\sum_{j = 1}^{b_u} e^{-j}  j}$ we used that it holds $\sum_{i = 1}^\infty i / 2^i = 2$.

\paragraph{Updating $\calB_u$ when $d_u / \td_u \notin (1/4, 4)$.}
We first describe how to address this change in $O(\log k)$ amortized expected time and then explain how to de-amortize this.

When $d_u$ becomes $\td_u / 4$ or $4 \td_u$, then our algorithm updates the current $\td_u$ to $\td_u'$, and re-creates $\calB_u$ for $\td_u'$. 
If $d_u \le \td_u / 4$, then $\td_u' \gets \td_u / 2$. Similarly, if $d_u \ge 4 \td_u$, then $\td_u' \gets 2 \td_u$.
This approach is standard and typically illustrated through the example of dynamic arrays.
For our problem, this technique yields amortized $O(\log k)$ expected update time.

If memory allocation takes $O(1)$ time, this technique can be de-amortized.
This de-amortization is standard, but we provide a couple of sentences of explanation for the sake of completeness. 
Instead of creating all the buckets from scratch when $d_u = \td_u / 4$ or $d_u = 4 \td_u$, a de-amortized algorithm does that gradually. That is, as soon as it starts updating the buckets for the current $\td_u$ value, it allocates the memory for buckets for both $\td_u' = \td_u / 2$ and $\td_u' = 2 \td_u$; only an allocation is performed, without any initialization.
The algorithm also maintains a variable $\IDsofar$, which, when $\td_u$ is updated, is initialized to $0$.

On a new neighbor $w$ update, if $\pi_w \le \IDsofar$, the algorithm carries over that update for all three bucket structures. If $\pi_w > \IDsofar$, the algorithm only updates the $\td_u$-buckets.
In addition, the algorithm considers $10$ elements from the $\td_u$-buckets with smallest $\pi$ but greater than $\IDsofar$ values, and copies them to the $(\td_u/2)$- and $(2 \td_u)$-buckets. The value of $\IDsofar$ is increased properly so to correspond to the last element copied from $\td_u$-buckets.

\section{CRCW PRAM Algorithm}
\label{section:pram}
We now describe how to execute our approach in CRCW PRAM in $O(1/\eps)$ rounds using $O(n^3)$ processors. We assume that the nodes are numbered $1$ through $n$. The same as discussed in \cref{sec:LCA}, instead of finding a random permutation of the nodes, each node chooses an integer rank from range $[0, n^{10})$. That is, in the shared memory, there is a designated space of $n$ memory blocks, each of size $10 \log n$ bits, such that the $i$-th of those memory blocks stores the rank of node $i$.

Assume that each node has the sorted list of its $k$ highest-ranked neighbors, where $k = O(1/\eps)$ is the parameter used in \cref{alg:our-pivot}. Then, \ourPivot can be directly executed in $O(k) = O(1/\eps)$ rounds in the CRCW PRAM model. It remains to discuss how to find $k$ highest-ranked neighbors of each node in $O(k)$ rounds.

\paragraph{Finding top-$k$ neighbors in $O(k)$ rounds.}
We assume that the neighbors of each node are given as an array. 
\newcommand{\pramMax}{\textsc{ArrayMax}\xspace}
It is well-known how to find the maximum of an array in $O(1)$ CRCW PRAM rounds using $O(n^2)$ processors~\cite{jaja1992parallel,akl1989design}; we use $\pramMax$ to refer to that algorithm.
To find $k$ highest-ranked neighbors of each node, we apply \pramMax $k$ times to each neighborhood array for all the nodes in parallel. The $i$-th of these invocations finds the $i$-th highest-ranked neighbors of the nodes. After the $i$-th highest-ranked neighbor $w_i$ of a node $v$ is found, $w_i$ is marked with a special symbol in the neighborhood array of $v$, implying that in the future invocations of \pramMax, $w_i$ is considered smaller than every other non-marked neighbor of $v$.

Since $\pramMax$ uses $O(n^2)$ processors and is executed for each of the nodes separately, this PRAM implementation of \cref{alg:our-pivot} uses $O(n^3)$ processors.

\section*{Acknowledgements}
We are grateful to anonymous reviewers for their valuable feedback and for suggesting to discuss the implementation of our algorithm in PRAM.
K.Makarychev was supported by the NSF Awards CCF-1955351 and EECS-2216970.
S.~Mitrovi\' c was supported by the Google Research Scholar and NSF Faculty Early Career Development Program \#2340048.
\bibliography{references} 
\bibliographystyle{icml2024}

\clearpage

\appendix

\newpage

\section{Illustrations for Query, Extended Query, Expensive, and Dangerous Paths}

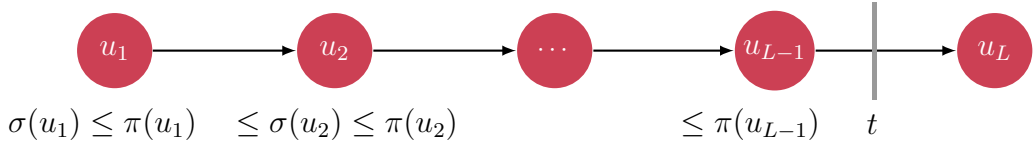
\begin{figure}[H]
    \centering
    \begin{tikzpicture}[scale=0.65,level distance=45mm,minimum size=10mm,
  sibling distance=30mm,
  every node/.style={fill=BrickRed,text=white, circle,inner sep=2pt},
  arrow/.style={thick,edge from parent/.style={draw,-latex}}
]
\node {\large$u_1$} child[grow=right,arrow] 
{
node{\large$u_2$}child[grow=right,arrow] 
{
node{$\cdots$}child[grow=right,arrow] 
{
node{\large$u_{L-1}$}child[grow=right,arrow]{
node{\large$u_{L}$}}}}};
\node[fill=none,color=black,rectangle] at (-0.25,-1.5) {\large$\sigma(u_1)\leq \pi(u_{1})$};
\node[fill=none,color=black,rectangle] at (4.75,-1.5) {\large$\leq \sigma(u_2)\leq \pi(u_{2})$};
\node[fill=none,color=black,rectangle] at (15.5,-1.5) {\large$t$};
\node[fill=none,color=black,rectangle] at (13,-1.5) {\large$\leq\pi(u_{L-1})$};
\fill[black!40!white] (15.5,-1) rectangle (15.6,1);
\end{tikzpicture}
    \caption{Path $(u_1,\dots,u_{L-1},u_L)$ is a query path if each $u_i$ ($i>1$) queries $u_{i-1}$. This condition is equivalent to 
    $\sigma(u_1)\leq \pi(u_1) \leq \cdots \leq \sigma(u_L) \leq \pi(u_L)$. Note that $\sigma(u)\leq \pi(u)$ for every node $u\in V$.
    Set $\calQ_t(u_1,u_2)$ contains this path for $t\geq \pi(u_{L-1})$.}
    \label{fig:query-path}
\end{figure}

\begin{figure}[H]
    \centering
    \begin{tikzpicture}[scale=0.65,level distance=35mm,
  sibling distance=30mm,,minimum size=10mm,
  every node/.style={fill=BrickRed,text=white, circle,inner sep=2pt},
  arrow/.style={thick,edge from parent/.style={draw,-latex}}
]
\node {\large$u_1$} child[grow=right,arrow] 
{
node{\large$u_2$}child[grow=right,arrow]
{
node{$\cdots$}child[grow=right,arrow] 
{
node{$u_{L-2}$}child[grow=right,arrow]
{
node{\large$u_{L-1}$}child[grow=right,arrow]{
node[fill=black]{\large$u_{L}$}}}}}};
\end{tikzpicture}
    \caption{Path $(u_1,\dots,u_{L-1},u_L)$ is an extended query path (EQ-path) if each vertex on the path, except for the first and last one, queries the previous vertex and $\pi(u_{L-2})\leq \sigma(u_L)$. This condition is equivalent to 
    $\sigma(u_1)\leq \pi(u_1) \leq \cdots\leq \pi(u_{L-2}) \leq \min(\sigma(u_{L-1}), \sigma(u_{L}))$. The last inequality says that neither $u_{L-1}$ nor $u_{L}$ is settled before $u_{L-2}$ is processed.}    
    \label{fig:eq-path}
\end{figure}
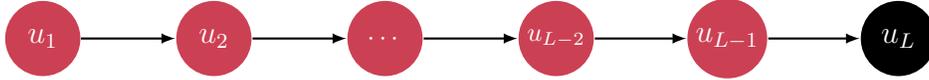

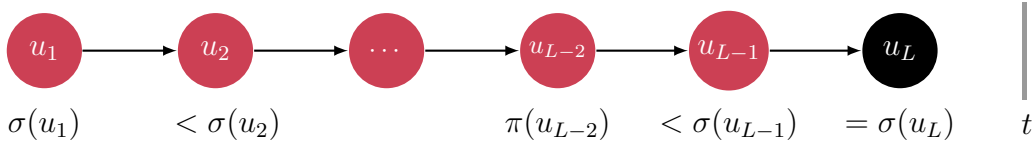
\begin{figure}[H]
    \centering
    \begin{tikzpicture}[scale=0.65,level distance=35mm,,minimum size=10mm,
  sibling distance=30mm,
  every node/.style={fill=BrickRed,text=white, circle,inner sep=2pt},
  arrow/.style={thick,edge from parent/.style={draw,-latex}}
]
\node {\large$u_1$} child[grow=right,arrow] 
{
node{\large$u_2$}child[grow=right,arrow]
{
node{$\cdots$}child[grow=right,arrow] 
{
node{$u_{L-2}$}child[grow=right,arrow]
{
node{\large$u_{L-1}$}child[grow=right,arrow]{
node[fill=black]{\large$u_{L}$}}}}}};
\node[fill=none,color=black,rectangle] at (0,-1.5) {\large$\sigma(u_1)$};
\node[fill=none,color=black,rectangle] at (3.75,-1.5) {\large$<\sigma(u_2)$};
\node[fill=none,color=black,rectangle] at (20.1,-1.5) {\large$t$};
\node[fill=none,color=black,rectangle] at (10.5,-1.5) {\large$\pi(u_{L-2})$};
\node[fill=none,color=black,rectangle] at (14,-1.5) {\large$<\sigma(u_{L-1})$};
\node[fill=none,color=black,rectangle] at (17.5,-1.5) {\large$=\sigma(u_{L})$};
\fill[black!40!white] (20,-1) rectangle (20.1,1);
\end{tikzpicture}
    \caption{Path $(u_1,\dots,u_{L-1},u_L)$ is an expensive extended query path (expensive EQ-path) if 
    (1) each vertex on the path, except for the first and last one, queries the previous vertex, (2) $\sigma(u_1)< \sigma(u_2)$, and (3) $\pi(u_{L-2})\leq \sigma(u_{L-1}) =\sigma(u_L)$. This condition is equivalent to 
    $\sigma(u_1)\leq \pi(u_1) <\sigma(u_2)\leq \cdots\leq \pi(u_{L-2}) \leq \sigma(u_{L-1}) = \sigma(u_L)$. Inequality $\sigma(u_1)\leq \pi(u_1) <\sigma(u_2)$ tells us that \pivot places $u_1$ and $u_2$ in distinct clusters (i.e., edge $(u_1,u_2)$ is cut by \pivot). Condition $\sigma(u_{L-1}) = \sigma(u_L)$ tells us that \pivot places $u_{L-1}$ and $u_{L}$ in the same cluster. Set $\calX_t(u_1,u_2)$ contains this path for $t\geq \sigma(u_{L-1})=\sigma(u_L)$.}   
    \label{fig:expensive-eq-path}
\end{figure}

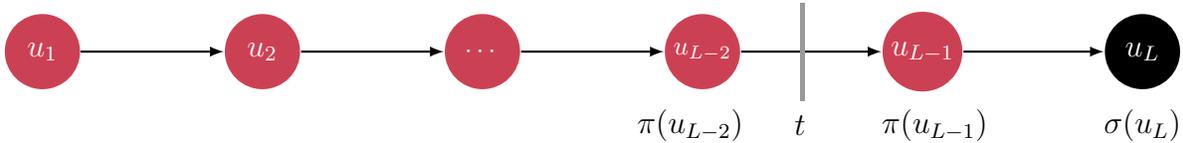
\begin{figure}[H]
    \centering
    \begin{tikzpicture}[scale=0.65,level distance=45mm,minimum size=10mm,
  every node/.style={fill=BrickRed,text=white, circle,inner sep=2pt},
  arrow/.style={thick,edge from parent/.style={draw,-latex}}
]
\node {\large$u_1$} child[grow=right,arrow] 
{
node{\large$u_2$}child[grow=right,arrow]
{
node{$\cdots$}child[grow=right,arrow] 
{
node{$u_{L-2}$}child[grow=right,arrow]
{
node{\large$u_{L-1}$}child[grow=right,arrow]{
node[fill=black]{\large$u_{L}$}}}}}};
\fill[black!40!white] (15.5,-1) rectangle (15.6,1);
\node[fill=none,color=black,rectangle] at (15.5,-1.5) {\large$t$};
\node[fill=none,color=black,rectangle] at (13.25,-1.5) {\large$\pi(u_{L-2})$};
\node[fill=none,color=black,rectangle] at (18.25,-1.5) {\large$\pi(u_{L-1})$};
\node[fill=none,color=black,rectangle] at (22.5,-1.5) {\large$\sigma(u_{L})$};
\end{tikzpicture}
    \caption{Path $(u_1,\dots,u_{L-1},u_L)$ is dangerous at time $t$ if it is an extended query path and 
    $\pi(u_{L-2})\leq t$ but $\pi(u_{L-1})>t$ and $\sigma(u_L)> t$.
}   
    \label{fig:dangerous}
\end{figure}

\newpage

\section{Empirical Evaluation}
\label{sec:empirical}

In this section, we conduct a simple empirical assessment of our algorithm. 
We compare the correlation clustering cost of \ourPivot to that of \Pivot, R-\pivot of \citet{BCMT}, and Narrow-\pivot of \citet{chakrabarty2023single} on synthetic graphs.  

\paragraph{Set-Up.} We use stochastic block model graphs generated as follows: Each sample graph has three partitions, each with 200 nodes. The probability of the appearance of an edge inside each partition is $0.9$, and between partitions is $0.1$. 

We generate 100 graphs and run \pivot, $R$-\pivot, Narrow-\Pivot, and \ourPivot on these graphs with parameter $R$ ranging from $2$ to $30$ (Here $R$ is the parameter $k$ for \ourPivot). Note that \pivot does not depend on $R$. For each $R$, we take the mean error of these runs for each algorithm (see \cref{fig:experiments}). We remove the standard deviation of the error for figure readability; the std is very similar for all the algorithms for $R>13$ and is around $2300$.

\begin{figure}
    \centering
    \includegraphics[width=0.5\columnwidth]{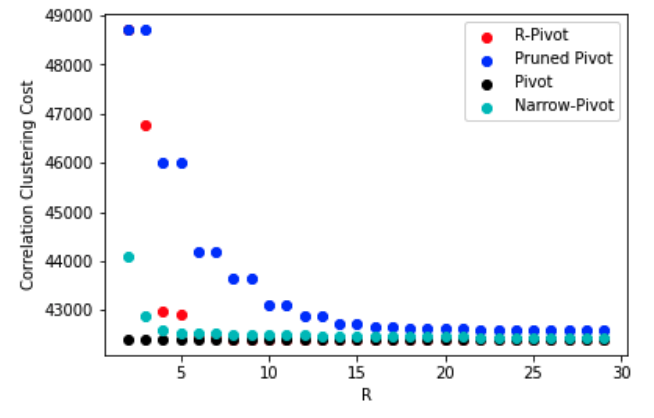}
    \caption{Comparison of correlation clustering cost for \pivot, $R$-\pivot, Narrow-\pivot and \ourPivot. The optimal clustering has an expected cost of less than 17970.\label{fig:experiments}}
\end{figure}

\paragraph{Results.} First note that the clustering that puts each partition in one cluster achieves an expected cost of $17970$, which is an upper bound on the average optimum value. 
We selected edge probabilities $0.1$ and $0.9$ partly to approximate the optimal clustering cost, as computing the exact value is NP-hard.
So, \Pivot's approximation factor is at least $2.35$ in this case. Furthermore, the trivial clustering that puts each node in one cluster results in $65730$ average error which is significantly more than the error of the other algorithms. 

We observe that all three \pivot variants converge fast to the cost of \pivot. 
Even though, as a function of $R$, \ourPivot has the steadiest improvement in cost, it still converges to the cost of \pivot exponentially.
\ourPivot queries significantly fewer nodes compared to the other algorithms, albeit at the expense of a negligible increase in its approximation factor. In fact, for $R\ge 4$, the increase in the cost of \ourPivot compared to \Pivot is less than $\%1$.  
This property of \ourPivot makes it flexible in adapting to parallel and dynamic settings while losing a small factor in the approximation guarantee. 
\section{Equivalence between \Pivot and \text{Sequential \pivot}}
In \cref{alg:standard-pivot} we recall the \pivot algorithm from \cite{pivot}.
\begin{algorithm}[h]
\caption{\textsc{\Pivot} \label{alg:standard-pivot}}
\begin{algorithmic}[1]
    \STATE \textbf{function} $\pivot(G = (V, E))$
    \STATE \ourIndent[1] \textbf{if} $V = \emptyset$:
    \STATE \ourIndent[2] terminate
    \STATE \ourIndent[1] Pick a random pivot $u \in V$.
    \STATE \ourIndent[1] Cluster together $u$ and its neighbors.
    \STATE \ourIndent[1] Let $H$ be obtained by removing $u$ and its neighbors from $G$.
    \STATE \ourIndent[1] $\pivot(H)$
    \end{algorithmic}
\end{algorithm}

\begin{lemma}\label{lem:pivot_equival}
    \cref{alg:standard-pivot} and \cref{alg:sequential-pivot} are equivalent
\end{lemma}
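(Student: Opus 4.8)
The plan is to fix the source of randomness, reduce both algorithms to deterministic procedures, and show those procedures compute the same clustering. First I would rewrite the ``pick a random pivot'' step of \cref{alg:standard-pivot} in the equivalent two-stage form: draw a uniformly random permutation $\pi\colon V\to\{1,\dots,n\}$ once, and then, whenever \cref{alg:standard-pivot} must pick a pivot among the vertices still present in the current graph $H$, let it pick the one of smallest $\pi$-value. Since at each recursive call the next pivot is chosen uniformly among the remaining vertices, this is the standard coupling between sampling without replacement and ranking by a uniform permutation (conditioned on the pivots chosen so far, the remaining set is determined and its minimum-rank element is uniform over it). Hence it suffices to show that for \emph{every} fixed $\pi$ the two algorithms --- \cref{alg:standard-pivot} run with this tie-free pivot rule and \cref{alg:sequential-pivot} run with ordering $\pi$ --- produce the same clustering.

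Next I would define, for a fixed $\pi$, a candidate pivot set $P=P(\pi)$ by the recursion
\[
u\in P \iff \text{no neighbor } v \text{ of } u \text{ with } \pi(v)<\pi(u) \text{ lies in } P,
\]
which is well defined by strong induction on the rank $\pi(u)$, since membership of $u$ depends only on vertices of strictly smaller $\pi$-value. I would then prove that each of the two algorithms outputs exactly the clustering whose pivots are $P$ and in which every non-pivot $u$ is placed in the cluster of its highest-ranked pivot neighbor (the neighbor in $P$ of smallest $\pi$-value; recall $u\in N(u)$, so a vertex in $P$ lies in its own cluster). For \cref{alg:sequential-pivot} this is almost immediate from the \textbf{while}-loop: when node $u$ is processed at iteration $\pi(u)$, every vertex of smaller $\pi$-value has already been processed, so its pivot status is known; the loop scans the neighbors of $u$ in increasing order of $\pi$-value and stops at the first pivot it meets, or at $u$ itself if there is none, in which case $u$ is marked a pivot. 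Thus $u$ is a pivot of \cref{alg:sequential-pivot} iff $u$ has no higher-ranked pivot neighbor --- the defining recursion of $P$ --- and the assignment is precisely ``highest-ranked pivot neighbor''.

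The main work, and the step I expect to be the obstacle, is carrying out the same analysis for \cref{alg:standard-pivot}, because that algorithm does \emph{not} resolve vertices in rank order: a pivot's cluster may absorb vertices of larger $\pi$-value than vertices resolved much later. I would instead argue by strong induction on the rank that (a) the pivots picked by \cref{alg:standard-pivot} appear in strictly increasing order of $\pi$-value, since each newly picked pivot is the smallest-$\pi$ vertex still present and the previously picked pivot (of smaller $\pi$-value) has just been removed; (b) a vertex $v$ is deleted from $H$ exactly at the step at which the first vertex of $N(v)$ to enter the (growing) pivot set is chosen, and by (a) this first one is the pivot neighbor of $v$ of smallest $\pi$-value, i.e.\ its highest-ranked pivot neighbor, so $v$ joins that pivot's cluster --- unless that first vertex is $v$ itself; (c) consequently $v$ is chosen as a pivot iff it is never deleted beforehand iff none of its neighbors of smaller $\pi$-value is a pivot of \cref{alg:standard-pivot}. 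Recursion (c) is exactly the definition of $P$, so the pivot set of \cref{alg:standard-pivot} equals $P$, and by (b) its assignment rule is again ``highest-ranked pivot neighbor''. Therefore, for every $\pi$, the two clusterings coincide, and combined with the coupling of the first paragraph this establishes that \cref{alg:standard-pivot} and \cref{alg:sequential-pivot} are equivalent.
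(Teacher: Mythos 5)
Your proposal is correct and takes essentially the same route as the paper: fix the permutation $\pi$ via the standard coupling, then argue by induction on rank that both algorithms declare the same pivots and make the same ``highest-ranked pivot neighbor'' assignments. The only cosmetic difference is that you introduce the explicit intermediate object $P$ (the lexicographically first MIS under $\pi$) and show each algorithm separately equals the $P$-induced clustering, whereas the paper's induction compares the two algorithms to each other directly; this buys a bit of modularity and makes step (a) --- that \cref{alg:standard-pivot} picks pivots in increasing $\pi$-order --- explicit where the paper leaves it implicit, but the underlying argument is the same.
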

\begin{proof}
    Instead of choosing a random pivot $u$ each time on Line~4 of \pivot (\cref{alg:standard-pivot}), we assume that before the algorithm is invoked a random permutation $\pi$ over the input nodes is chosen.
    Then, Line~4 is implemented by choosing the first, with respect to $\pi$, node available in $V$ in that invocation of \pivot.

    We prove the lemma by induction on $\pi$. First, consider the node with rank $1$, i.e., let $\pi(u)= 1$. $u$ is a pivot in \pivot, and since $u$ has no higher ranked neighbors, it executes line 8 in \textsc{Sequential \Pivot} and hence is a pivot there as well.

    Suppose that for some $t\ge 1$, both \pivot and \textsc{Sequential \Pivot} cluster all nodes with ranks $1,\ldots, t$ the same way. Suppose $\pi(u)=t+1$. First, suppose that $u$ is a pivot in \pivot. Then it must be that $u$ does not have any pivot neighbor with a higher rank: if there is such node $v$, then when $v$ is being clustered, $u$ is put in the cluster of $v$ in Line~5 of \cref{alg:standard-pivot}. This means that in \textsc{Sequential \Pivot} $u$ is also a pivot. 
    
    Now, suppose that $u$ is not a pivot in \pivot. Let $v$ be the pivot of $u$ in \pivot. Then it must be that $v$ is the highest ranked pivot in the neighborhood of $u$: if there is a higher ranked pivot $v'$ in the neighborhood of $u$, then when $v'$ is being processed (before $v$), $u$ is put in the cluster of $v'$ in Line~5 of \cref{alg:standard-pivot}. This means that in \textsc{Sequential \Pivot}, when $u$ is being processed, Line~7 is executed when the neighbor $v$ of $u$ is picked. 
\end{proof}
\end{document}